\def\be{\begin{equation}}
\def\ee{\end{equation}}
\def\bea{\begin{eqnarray}}
\def\eea{\end{eqnarray}}
\def\bma{\begin{mathletters}}
\def\ema{\end{mathletters}}
\def\0{\overline{0}}
\def\Tr{\mbox{Tr}}
\def\q0{\underline{0}}
\def\H{{\cal H}}
\def\S{{\cal S}}
\def\id{{\mathbb I}}
\def\L{{\cal L}}
\def\Q{{\cal Q}}
\def\GW{{\cal GW}}
\def\TW{{\cal TW}}
\def\R{\mathbb{R}}
\def\N{\mathbb{N}}
\def\tr{\mbox{tr}}
\def\TOBL{{\cal TOBL}}
\def\NS{{\cal NS}}
\def\one{\leavevmode\hbox{\small1\normalsize\kern-.33em1}}
\def\bra#1{\langle#1|} \def\ket#1{|#1\rangle}
\newtheorem{theo}{Theorem}
\newtheorem{defin}{Definition}
\newtheorem{lemma}{Lemma}
\newtheorem{conj}{Conjecture}
\newtheorem{quest}{Question}
\newtheorem{prob}{Problem}
\def\id{{\mathbb I}}
\def\tr{\mbox{tr}}
\begin{document}

\title{Closed sets of correlations: answers from the zoo}
\author[1]{Ben Lang}
\author[2]{Tam\'as V\'ertesi}
\author[3]{Miguel Navascu\'es}
\affil[1]{School of Physics, University of Bristol, Tyndall Avenue, Bristol, BS8 1TL, United Kingdom}
\affil[2]{Institute for Nuclear Research, Hungarian Academy of Sciences, H-4001 Debrecen, P.O. Box 51, Hungary}
\affil[3]{Universitat Aut\`onoma de Barcelona, 08193 Bellaterra (Barcelona), Spain}

\begin{abstract}
We investigate the conditions under which a set of multipartite nonlocal correlations can describe the distributions achievable by distant parties conducting experiments in a consistent universe. Several questions are posed, such as: are all such sets ``nested'', i.e., contained into one another? Are they discrete or do they form a continuum? How many of them are supraquantum? Are there non-trivial polytopes among them? We answer some of these questions or relate them with established conjectures in complexity theory by introducing a ``zoo'' of physically consistent sets which can be characterized efficiently via either linear or semidefinite programming. As a bonus, we use the zoo to derive, for the first time, concrete impossibility results in nonlocality distillation.

\end{abstract}

\maketitle

\section{Introduction}

The lack of a physical intuition behind the fundamental axioms under which Quantum Theory rests (states are rays of a Hilbert space, etc.) has led many to wonder whether these axioms are actually necessary to construct a physical theory, or, on the contrary, different, weirder, theories exist, at least at the logical level. This reflection has inspired an ambitious program to reconstruct Quantum Mechanics from physical principles, see \cite{hardy,axioms, lluis_mas, chiribella, hardy2} for some impressive achievements in this topic. An alternative approach has been to isolate those features which make Quantum Mechanics special, such as nonlocality, and investigate if similar or even stranger phenomenons can be found in other physical theories. Popescu \& Rohrlich inaugurated the latter line of research by proposing the no-signalling principle as an attempt to bound the set of feasible correlations that two or more distant parties can establish. As they showed, the no-signalling condition is not strong enough to single out the quantum set \cite{PR_box}. This later caused a proliferation of device-independent physical principles (such as  Non-trivial Communication Complexity \cite{brassard}, No Advantage for Nonlocal Computation \cite{linden}, Information Causality \cite{info_caus}, Macroscopic Locality \cite{mac_loc} and Local Orthogonality \cite{loc_orth}) which have since constrained the set of physically admissible correlations further and further.

In \cite{closed} it was noted that, no matter which principles a physical theory satisfies, its associated set of correlations must be \emph{closed under wirings}, meaning that any combination or wiring of different valid distributions can only produce boxes inside the considered set. Contrary to all expectations, the authors of \cite{closed} proved that this seemingly innocuous concept is highly non-trivial, and thus may play an important role in the axiomatization of quantum mechanics, or the exploration of alternative theories.

Despite the fact that five years have passed since the concept was coined, we still ignore many facts about consistent sets of correlations. Even though investigations into nonlocality distillation suggest that there exists a continuum of such sets \cite{short,forster}, very few concrete examples of provenly closed sets are known. The original paper \cite{closed} just identifies four of them, although further results demonstrate (in a non-constructive way) the existence of infinitely many \cite{dukaric}.

We feel that the lack of progress in this subject is due in part to the scarcity of examples of computable closed sets. In this paper, we intend to fill this gap by introducing several families of closed sets which can be efficiently characterized using standard tools of convex optimization. Using these sets, we manage to provide an answer to many important questions regarding this fascinating topic. Along the way, we also prove the existence of bipartite physical principles which are \emph{unstable under composition}, in the sense that two boxes living in different theories compatible with such principles can be wired together to produce a box violating them.

The structure of this paper is as follows: first, in Section \ref{setup} we will describe the non-locality framework that we will be invoking through this text. Then, in Section \ref{questions}, we will review known results on consistent sets of correlations, and we will list some relevant questions on the topic. In Section \ref{zoo} we will introduce our ``zoo'' of consistent sets admitting an efficient characterization. These sets will allow us to answer many of the questions previously posed in Section \ref{answers}. Finally, in Section \ref{conclusion}, we will present our conclusions.

\section{The nonlocality framework and consistent sets of correlations}
\label{setup}

\subsection{The setup}
Let Alice and Bob be two parties conducting experiments in distant laboratories. We will assume that Alice and Bob ignore the inner workings of their measurement devices: for Alice (Bob), an experiment is a process or black box to which she (he) feeds an input $x$ ($y$) from the alphabet ${\cal X}$ (${\cal Y}$), and from which she (he) receives an output $a$ ($b$) from the alphabet ${\cal A}$ (${\cal B}$). Along this article, we will consider Bell scenarios where ${\cal X},{\cal Y},{\cal A},{\cal B}$ have finite cardinality. If Alice and Bob compare their outputs in independent runs of the experiment, then they can estimate the probabilities $P(a,b|x,y)$ that define their pair of correlated boxes.

\noindent An important set of such boxes is the set of all boxes compatible with classical physics.

\begin{defin}\textbf{The local set $\L$}\\
We say that $P(a,b|x,y)$ is \emph{local} or \emph{classical} when it can be expressed as

\be
P(a,b|x,y)=\sum_{\lambda}P(\lambda)P(a|x,\lambda)Q(b|y,\lambda),
\ee

\noindent with $P(\lambda)\geq 0$, $\sum_{\lambda}P(\lambda)=1$.

\end{defin}

\noindent The set $\L$ of all local distributions is a polytope (a convex set with finitely many vertices) whose extreme points are given by the deterministic boxes $P(a,b|x,y)=\delta_{a,f(x)}\delta_{b,g(y)}$. As such, it can be characterized via linear programming (LP) \cite{linear_prog}.

A more complicated, but perhaps more relevant set is $\Q$, the set of all boxes realizable with quantum mechanical systems.

\begin{defin}{\textbf{The quantum set $\Q$}}\\
A distribution $P(a,b|x,y)$ is quantum iff

\be
P(a,b|x,y)=\tr(\rho_{AB}E^x_a\otimes F^y_b),
\ee

\noindent where $\rho_{AB}\in B(\H_A\otimes\H_B)$ is a normalized quantum state acting over the tensor product of the Hilbert spaces $\H_A,\H_B$, and $\{E^x_a\}\subset B(\H_A)$ ($\{F^y_b\}\subset B(\H_B)$) are projector operators satisfying the completeness relations $\sum_a E^x_a=\id_A$, $\sum_b F^y_b=\id_B$.
\end{defin}

\noindent The set $\Q$ is not a polytope: it has both straight and curved surfaces, and its characterization is not known to be a decidable problem \cite{decidable}, although there are algorithms to bound it from the outside \cite{quantum1, quantum2} via semidefinite programming \cite{sdp} and the inside \cite{I3322_PV} via see-saw methods.

One can go beyond our present understanding of the universe, and consider also boxes which do not admit a quantum representation at all. In this respect, an interesting set of boxes is the one resulting from demanding the \emph{no-signalling conditions} \cite{PR_box} to hold:

\begin{defin}{\textbf{The no-signalling set $\NS$}}\\

$P(a,b|x,y)$ is no-signalling if it satisfies the no-signalling conditions

\be
\sum_{a}P(a,b|x,y)=P(b|y), \sum_{b}P(a,b|x,y)=P(a|x).
\label{no_sig}
\ee

\end{defin}

\noindent Eq. (\ref{no_sig}) implies that Alice (Bob) cannot modify Bob's (Alice's) statistics by virtue of her (his) input choice, and hence the said pair of boxes does not allow both parties to violate causality. Like $\L$, the set $\NS$ is also a polytope, and thus admits an LP characterization.

By far, the most studied setting in nonlocality is the simplest one where the sets $\L$, $\Q$ and $\NS$ differ: the 2222 Bell scenario, with $|{\cal X}|=|{\cal Y}|=|{\cal A}|=|{\cal B}|=2$, where $x,y,a,b$ are assumed to take values in $\{0,1\}$. In this setting, the local set $\L$ is characterized by the no-signalling conditions and the Clauser-Horne-Shimony-Holt (CHSH) \cite{chsh} inequality (and its permutations):

\be
|\langle X_0Y_0\rangle+\langle X_0Y_1\rangle+\langle X_1Y_0\rangle-\langle X_1Y_1\rangle|\leq 2,
\label{chsh_ineq}
\ee

\noindent where $\langle X_iY_j\rangle=P(a=b|i,j)-P(a\not=b|i,j)$.

Conversely, the extreme points of the no-signalling set $\NS$ are the deterministic points plus all equivalent forms of the Popescu-Rohrlich (PR) box \cite{PR_box}:

\be
\mbox{PR}(a,b|x,y)=\frac{1}{2}\delta_{a\oplus b,x\cdot y}.
\ee

It can be shown that any box in the 2222 scenario can be transformed via local operations and shared randomness into an \emph{isotropic box} of the form

\be
\mbox{PR}^\lambda(a,b|x,y)=\lambda\frac{1}{2}\delta_{a\oplus b,x\cdot y}+(1-\lambda)\frac{1}{4},
\ee

\noindent while keeping the same CHSH value, see \cite{lluis} for a proof. In the segment that goes from $\lambda=0$ to $\lambda=1$, the intervals $\lambda\in [0, \frac{1}{2}]$ and $\lambda\in [0, \frac{1}{\sqrt{2}}]$ correspond, respectively, to the $\L$ and $\Q$ regions \cite{tsi_bound}.

\subsection{Wirings}
Suppose that Alice and Bob, rather than preparing a pair of boxes, produce two independent realizations, see Figure \ref{wired}. Then, they can generate a new effective box pair by combining inputs and outputs of the two boxes and postprocessing the two ``internal'' outcomes, as shown in Figure \ref{wired}. This mechanism of building new boxes by processing outputs and inputs of different local boxes in a sequential way is known as \emph{wiring} \cite{closed}. Note that the identity of the first box to measure may depend in general of the effective input assigned to each party. Also, for three boxes or more, the choice of the next box to measure may depend on all the previous output history. The set of all deterministic wirings thus grows at least as $n!$ with the number $n$ of local boxes. Even worse, a brute-force search over the set of all possible wirings already becomes impractical for $n\approx 5$ boxes.

\begin{figure}
  \centering
  \includegraphics[width=8.5 cm]{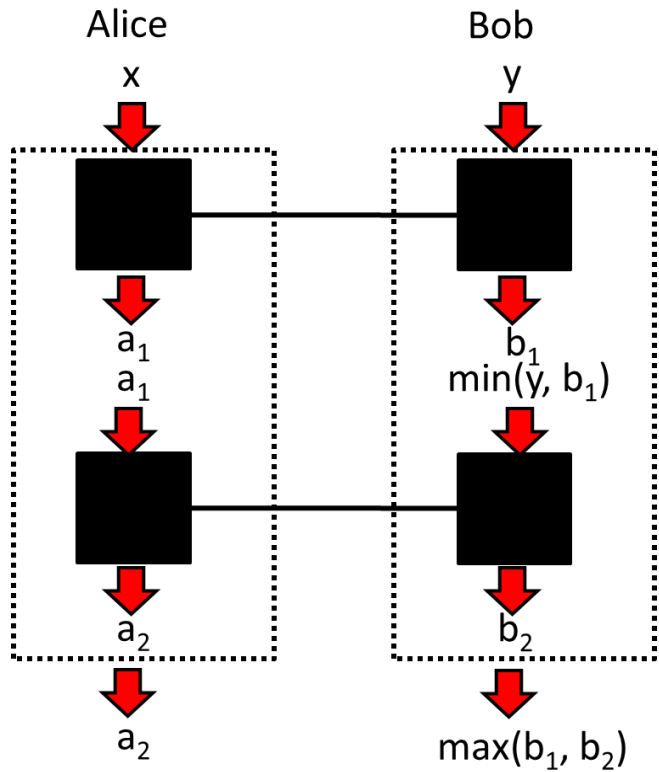}
  \caption{\textbf{Wirings.} By classical circuitry, Alice and Bob can turn two independent pairs of (black) boxes into a new effective pair (dashed line).}
  \label{wired}
\end{figure}

Let us now list the features that a set of correlations ${\cal S}$ must satisfy if ${\cal S}$ is supposed to represent the set of box pairs realizable in a given physical theory. For any pair of boxes $P_1,P_2\in {\cal S}$, Alice and Bob can always prepare the box $\lambda P_1+(1-\lambda)P_2$ if they share some classical randomness: ${\cal S}$ must hence be a convex set. What is more, as noted in \cite{closed}, any possible wiring of a number of box pairs in ${\cal S}$ must be also contained in ${\cal S}$. Finally, Alice and Bob can always interchange their boxes, and so ${\cal S}$ must be symmetric under permutations of the two parties. This reasoning leads to a basic definition on which this work will revolve.

\begin{defin}
Let ${\cal S}$ be a set of boxes. ${\cal S}$ is \emph{physically closed} if

\begin{enumerate}
\item {\cal S} is convex (convexity).
\item {\cal S} is symmetric under the exchange of Alice and Bob (symmetry).
\item For any collection of boxes $\{P_i(a,b|x,y)\}_{i=1}^n\subset {\cal S}$ and any pair of wirings ${\cal W}_A,{\cal W}_B$, the box ${\cal W}_A\otimes {\cal W}_B(\otimes_{i=1}^nP_i)$ belongs to ${\cal S}$ (closure under wirings).

\end{enumerate}

\end{defin}

For simplicity, along the rest of this article, we will refer to physically closed sets simply as \emph{closed sets}. There will be no ambiguity between physical closure and topological closure, since all sets of correlations considered in the article are also topologically closed\footnote{This actually may not be the case for $\Q$. In the following, we will therefore identify this set with its topological closure, and likewise with $\Q_{+}$, defined later.}.

Note that the above definition implies that $\L\subset {\cal S}$. This follows from the convexity of ${\cal S}$ and the observation that, via simple wirings, we can make any pair of boxes deterministic. Notice also that, for any two closed sets $\S_1,\S_2$, one can generate a new closed set by taking their intersection $\S_1\cap \S_2$. Similarly, one can define $\S_1+\S_2$ as the smallest closed set containing $\S_1\cup\S_2$. It is clear that the operator ``$+$'' used here is commutative and associative. The complexity of the operations ``$\cap$'' and ``$+$'', though, is very different: while membership of $\S_1\cap \S_2$ can be easily decided if a characterization of $\S_1$, $\S_2$ is available, determining the limits of $\S_1+\S_2$ can well be an undecidable problem, see the next section.

$\L$, $\Q$ and $\NS$ are distinguished instances of closed sets. Another example is the topological closure of the set of all boxes which can be generated via convex combinations, permutation of the parties and wirings of arbitrarily many copies of a finite set of boxes ${\cal F}$. We will denote such a set as $\S_{{\cal F}}$, and the boxes in ${\cal F}$ will be called \emph{the generators of } $\S_{{\cal F}}$. From all the above, it is clear that

\be
\S_{{\cal F}}+\S_{{\cal G}}=\S_{{\cal F}\cup {\cal G}}.
\ee

\section{Some results and some questions}
\label{questions}

In \cite{closed}, it is shown that fairly natural polytopes in the 2222 scenario\footnote{Namely, the convex hull of the deterministic points plus all re-labelings of the box $PR^\lambda(a,b|x,y)$, for $\frac{1}{2}<\lambda<1$, and the polytope defined by the no-signalling conditions (\ref{no_sig}) plus some relaxed version of the CHSH inequality (\ref{chsh_ineq}) where the numerical coefficient 2 is replaced by a greater amount.} fail to be closed. This evidences that physical closure is an extremely non-trivial property. As a consequence, known closed sets are scarce. In view of the difficulty of defining closed polytopes, the authors of \cite{closed} ask:

\begin{quest}
\label{polytope}
Are there non-trivial closed polytopes?
\end{quest}

Most results on closed sets stem from research on nonlocality distillation \cite{non_loc_dist,short, forster, dukaric}. In the language of closed sets, nonlocality distillation is equivalent to the following membership problem:

\begin{prob}{\textbf{Main Distillability Problem}}\\
Let $P(a,b|x,y),Q(a,b|x,y)\in \NS$. Determine if $P(a,b|x,,y)\in {\cal S}_{\{Q(a,b|x,y)\}}$.
\end{prob}

\noindent The absence of an algorithm to solve the above problem after years of research raises the next question:

\begin{quest}
\label{decidable}
Is the Main Distillability Problem decidable?
\end{quest}

The majority of the research in nonlocality distillation has been focused on the distillation of isotropic PR-boxes. That is, given a number of identical copies of $\mbox{PR}^\lambda(a,b|x,y)$, with $\lambda>\frac{1}{2}$ and shared randomness, the goal is to wire them up so as to build a new isotropic box $\mbox{PR}^{\lambda'}(a,b|x,y)$, with $\lambda'>\lambda$. So far, all attempts to distill isotropic boxes have failed, and this has led to the following widely held conjecture:

\begin{conj}
\label{distill_conj}
For any $1>\mu>\lambda>\frac{1}{2}$, $\mbox{PR}^\mu(a,b|x,y)\not\in \S_{\{\mbox{PR}^\lambda(a,b|x,y)\}}$.
\end{conj}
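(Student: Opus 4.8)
The plan is to prove the conjecture by a separation argument, exploiting the fact that $\S_{\{\mbox{PR}^\lambda\}}$ is, by construction, the \emph{smallest} closed set containing $\mbox{PR}^\lambda$. Concretely, suppose we can exhibit a closed set $\S$ with $\mbox{PR}^\lambda\in\S$ but $\mbox{PR}^\mu\notin\S$. Then minimality forces $\S_{\{\mbox{PR}^\lambda\}}\subseteq\S$, and hence $\mbox{PR}^\mu\notin\S_{\{\mbox{PR}^\lambda\}}$. So the entire statement reduces to producing, for suitable pairs $\lambda<\mu$, a closed ``certificate'' set that contains the weaker isotropic box while excluding the stronger one.

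Next I would use the linear structure of the isotropic family to reduce this to a single scalar condition. For any closed set $\S$, convexity together with the inclusion $\L\subseteq\S$ implies that $\{\lambda:\mbox{PR}^\lambda\in\S\}$ is an interval of the form $[0,\lambda_{\max}(\S)]$ with $\lambda_{\max}(\S)\geq\frac{1}{2}$; since every set in the zoo is also topologically closed, the threshold $\lambda_{\max}(\S)$ is attained. A certificate separating $\mbox{PR}^\lambda$ from $\mbox{PR}^\mu$ is then nothing but a closed set with $\lambda\leq\lambda_{\max}(\S)<\mu$. Consequently it suffices to show that the achievable values $\{\lambda_{\max}(\S):\S\in\text{zoo}\}$ are \emph{dense} in $(\frac{1}{2},1)$: given any $\frac{1}{2}<\lambda<\mu<1$, density produces a closed set whose threshold lies in $(\lambda,\mu)$, certifying non-membership. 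As a consistency check, the quantum set already supplies the isolated value $\lambda_{\max}(\Q)=\frac{1}{\sqrt{2}}$, ruling out distillation across the Tsirelson bound.

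To obtain a dense family of thresholds I would turn to the zoo itself. Each of its members is provably closed and admits an efficient LP or SDP description, so for a given member the quantity $\lambda_{\max}(\S)$ is computed by a one-parameter feasibility optimization: maximize $\lambda$ subject to $\mbox{PR}^\lambda\in\S$. Parametrizing a family of zoo sets by a continuous handle $t$ and checking that $\lambda_{\max}(\S_t)$ depends continuously on $t$, I would invoke the intermediate value theorem to fill an entire subinterval of thresholds, and several overlapping families should then cover a dense subset of $(\frac{1}{2},1)$. Crucially, because closure under wirings has already been established for every zoo member, each such set is automatically a legitimate certificate, so no further wiring analysis is needed once the thresholds are in hand.

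The main obstacle is coverage near the local boundary, i.e.\ producing closed sets with $\lambda_{\max}(\S)$ arbitrarily close to $\frac{1}{2}$. This is exactly where the conjecture is most delicate: a set that contains a barely-nonlocal box yet is stable under \emph{every} wiring cannot sustain even a small margin above $\L$ if any distillation route exists, so the mere existence of such thin closed sets is essentially equivalent to the statement being proved in that regime. I therefore expect the \emph{construction} of a zoo family with $\lambda_{\max}\to\frac{1}{2}^+$, rather than its verification, to be the crux; should the zoo only realize thresholds in a proper subinterval of $(\frac{1}{2},1)$, the same argument would still yield \emph{partial} impossibility results, establishing the conjecture for all pairs $\lambda<\mu$ whose separating threshold is available.
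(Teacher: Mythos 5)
The statement you are asked to prove is presented in the paper as a \emph{conjecture}, and the paper offers no proof of it: it explicitly lists the continuum question and the decidability of distillation as open problems, and the only rigorous results it reports in this direction are (i) Short's and Forster's finite-copy non-distillation results, (ii) Dukaric and Wolf's non-constructive proof that infinitely many undistillable values of $\lambda$ exist in $(\frac{1}{2},\frac{1}{\sqrt{2}})$, and (iii) the concrete but isolated undistillable points obtained from the zoo (Figures 3 and 4). Your separation strategy is sound as far as it goes, and it is in fact exactly the mechanism the paper uses to obtain its partial results: any closed set $\S$ with $\mbox{PR}^\lambda\in\S$ and $\mbox{PR}^\mu\notin\S$ certifies $\mbox{PR}^\mu\notin\S_{\{\mbox{PR}^\lambda\}}$ by minimality, and your reduction to a single attained threshold $\lambda_{\max}(\S)$ via convexity, the inclusion $\L\subseteq\S$, and topological closure is correct.

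The genuine gap is the step you yourself flag: you never construct the dense family of closed sets with thresholds covering $(\frac{1}{2},1)$, and nothing in the paper supplies one. The zoo is a discrete list of LP/SDP-definable sets, each contributing one fixed threshold per Bell scenario; there is no continuously parametrized family of \emph{provably} closed sets whose thresholds could be swept by an intermediate-value argument, and whether a continuum of closed sets exists at all is precisely Question \ref{continuum}, left open. Near $\lambda=\frac{1}{2}^{+}$ the difficulty is structural, not technical: exhibiting a closed set with $\lambda_{\max}$ just above $\frac{1}{2}$ is essentially equivalent to proving non-distillability in that regime, so your argument becomes circular exactly where the conjecture is hardest. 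What you have written is a correct reduction of the conjecture to an unproved density claim that is at least as hard as the conjecture itself; at best it recovers the pointwise impossibility results the paper already derives, and it does not constitute a proof.
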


There are partial results in support of this conjecture. Short proved that no wiring of two isotropic boxes can increase its CHSH value \cite{short}. Using ideas from dynamical programming, this result was later extended by Forster to $n=9$ boxes \cite{forster}. In \cite{dukaric}, Dukaric \& Wolf study CHSH distillation in the quantum nonlocal range $\frac{1}{2}<\lambda<\frac{1}{\sqrt{2}}$. They provide an explicit bound on the amount of possible CHSH distillation, and, as as side result, prove that there are infinitely many values of $\lambda$ in the range $(\frac{1}{2},\frac{1}{\sqrt{2}})$ where distillation is impossible.

Unfortunately, Dukaric and Wolf's proof does not identify any element of the set $\Lambda$ of values of $\lambda$ for which distillation is impossible. Its existence and size nevertheless implies that there are infinitely many different subquantum sets which are closed under wirings. Indeed, define $\Q_\lambda\equiv {\cal S}_{\{\mbox{PR}^\lambda(a,b|x,y)\}}$. Then, for $\lambda,\lambda'\in (\frac{1}{2},\frac{1}{\sqrt{2}})$ with $\lambda>\lambda'$, the sets $\Q_{\lambda}$ and $\Q_{\lambda'}$ satisfy $\L \subsetneq \Q_{\lambda'}\subsetneq \Q_{\lambda'}\subsetneq \Q$.

From the proof it is also not clear whether $\Lambda$ is discrete (countable), or continuous. This is an important matter: note that, if the set of all closed sets were discrete or ``quantized'', then all physical theories could be \emph{numbered} according to which set of correlations they occupy! That would give a lot of structure/hope to the task of classifying all closed sets. We therefore find imperative to answer the question below.

\begin{quest}
\label{continuum}
Is there a continuum of sets which are closed under wirings?
\end{quest}

\noindent Note that, if Conjecture \ref{distill_conj} is true, the above question must be answered in the affirmative.

Dukaric and Wolf's result postulates the existence of non-trivial sets closed under wirings, but does not offer concrete limits of nonlocality distillation. Exploring the present literature in nonlocality, though, one can find some other instances of closed subquantum sets. Consider, for example, the set of all bipartite correlations achievable by two parties conducting quantum measurements over any number of copies of the maximally entangled state $\ket{\psi^+}=\frac{1}{\sqrt{2}}(\ket{00}+\ket{11})$:

\begin{defin}{\textbf{$\Q_{+}$}}\\
$P(a,b|x,y)$ belongs to $\Q_{+}$ iff there exist positive measurement operators $E^x_a,F^y_b\geq 0$, with $\sum_aE^x_a=\sum_bF^y_b=\id$, such that:
\be
P(a,b|x,y)=\bra{\psi^+}^{\otimes N}E^x_a\otimes F^y_b\ket{\psi^+}^{\otimes N}.
\ee

\end{defin}

Clearly, any wiring of a collection of boxes in $\Q_{+}$ can be expressed as a generalized measurement over a number of maximally entangled states, i.e., $\Q_{+}$ is closed under wirings. Also, $Q_{+}\subset Q$. Moreover, there is plenty of evidence in the literature that such an inclusion relation is strict. It was already noted by Eberhard \cite{eberhard}, and confirmed in \cite{liang} that certain Bell inequalities in the 2222 Bell scenario cannot be maximized if our only quantum resource are maximally entangled states. In the slightly more complex 3322 scenario ($|{\cal X}|=|{\cal Y}|=3$, $|{\cal A}|=|{\cal B}|=2$), Vidick \& Wehner \cite{vidick} proved that the $I_{3322}$ Bell inequality \cite{I3322}

\begin{eqnarray}
I_{3322}=&&-P_A(1|1)-P_B(1|0)-2P_B(1|1)+ \nonumber\\
&&+P(1,1|0,0)+P(1,1|0,1)+P(1,1|1,0)+P(1,1|1,1)- \nonumber\\
&&-P(1,1|0,2)+P(1,1|1,2)-P(1,1|2,0)+P(1,1|2,1)
\label{I3322_def}
\end{eqnarray}

\noindent can only be violated up to $I_{3322}=0.25$ using maximally entangled states. This number must be compared with the quantum maximum $I^{\Q}_{3322}\approx 0.250875$ \cite{I3322_PV}. Moreover, as shown by Palazuelos \& Junge \cite{junge}, as we increase the number of inputs and outputs, the difference between the projections of $\Q_{+}$ and $\Q$ becomes arbitrarily large.

So much for subquantum sets. How about supraquantum? Are there non-trivial instances of closed sets under wirings which contain $\Q$ strictly?

In \cite{mac_loc}, the authors identify the (closed) set of all distributions compatible with the principle of Macroscopic Locality with $\Q^1$, a first outer approximation to $\Q$ defined in \cite{quantum1}. $\Q^1$ can be shown to be different from the quantum set even in the 2222 scenario, where a CHSH violation of $2\sqrt{2}$ can be achieved with biased outcomes \cite{mac_loc}. It can also be seen that the maximal violation of the $I_{3322}$ inequality is $I_{3322}=0.3660$ (which number without truncation agrees with $(\sqrt3-1)/2$ up to eight decimal digits), much higher than its quantum counterpart.

However, besides this set and the no-signalling polytope, there are no more examples of postquantum sets in the literature. This leads us to wonder if, even though there exist infinitely many subquantum closed sets, there could be just a finite number of them containing $\Q$.

\begin{quest}
\label{supraquantum}
In a fixed non-locality scenario, how many different closed sets under wirings contain $\Q$?
\end{quest}

An interesting feature of the sets which we have examined so far is that they have a \emph{nested structure}, i.e., they satisfy $\L\subset \Q_\lambda \subset \Q_{+}\subset \Q\subset \Q^1\subset \NS$. One wonders if this is necessarily the case, i.e., if all conceivable physical theories form a hierarchy when only sets of correlations are considered.

\begin{quest}
\label{intersection}
Are there closed sets ${\cal S}_1$, ${\cal S}_2$ with non-trivial intersection ${\cal S}_1\cap {\cal S}_2\not={\cal S}_1,{\cal S}_2$?
\end{quest}

In order to motivate this question further, consider the following problem, faced in the axiomatization of quantum mechanics. Device-independent physical principles like Information Causality \cite{info_caus} or Macroscopic Locality \cite{mac_loc} have been proposed to limit the set of all reasonable physical distributions. The usual procedure to decide which distributions are ``compatible'' with these principles is to verify that, given $P(a,b|x,y)$, none of the distributions in the set $\S_{\{P(a,b|x,y)\}}$ violates such principles. However, if we denote by ${\cal Z}$ the set of all distributions satisfying some principle $Z$ in the previous sense, it could well be that

\be
\bigcup_{P\in {\cal Z}}\S_{\{P\}}
\ee

\noindent is not closed under wirings. That would imply that, even though all boxes in the set ${\cal Z}$ cannot violate principle $Z$ by themselves, some of them can be wired together into a pair of boxes that does not respect $Z$ anymore. This possibility is captured in the next definition:

\begin{defin}{\textbf{Stability under composition}}\\
A device-independent principle $Z$ is stable under composition iff, for any pair of closed sets $\S_1,\S_2$, compatible with $Z$, the set $\S_1+\S_2$ is also compatible with $Z$.
\end{defin}

\noindent Both Macroscopic Locality \cite{mac_loc} and the No-Signalling Principle \cite{PR_box} are stable under composition, and hence the notion of ``the largest set of boxes which satisfy Macroscopic Locality'' or ``the set of all boxes which respect No-signalling'' is well defined. On the other hand, expressions like ``the largest set of boxes which respect Information Causality'' may not have any meaning at all: indeed, if Information Causality is not stable under composition, there must be two or more maximal closed sets compatible with this principle. Finally, it is worth noticing that there are already examples of unstable physical principles in contextuality scenarios: in \cite{bin, barbara}, it is proven that no set of contextual correlations can \emph{strictly} contain $\Q_1$, as defined in \cite{tobias}, while satisfying Local Orthogonality (LO) \cite{tobias,cabello}. On the other hand, it is known that there are contextual correlations $q$ beyond $\Q_1$ which nevertheless satisfy LO. It hence follows that $q$, together with some of the correlations in $\Q_1$ can activate a violation of LO.

Coming back to nonlocality, note that any Bell-type inequality $B(P)=\sum_{a,b,x,y}B(a,b|x,y)P(a,b|x,y)\leq K_B$ can be interpreted as a bipartite device-independent physical principle. From this point of view, stability under composition and non-trivial intersection of closed sets are related by the next theorem:

\begin{theo}
\label{stable}
There exists a bipartite Bell-type linear inequality unstable under composition iff there exist two bipartite closed sets with non-trivial intersection.
\end{theo}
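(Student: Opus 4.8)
The statement being an equivalence, I would prove the two implications separately, and the ``only if'' direction turns out to be almost immediate. Suppose a Bell-type principle $B(P)\le K_B$ is unstable under composition. By definition there are closed sets $\S_1,\S_2$, both compatible with it, so that $B(P)\le K_B$ for every $P\in\S_1\cup\S_2$, while $\S_1+\S_2$ contains some box $P^\ast$ with $B(P^\ast)>K_B$. I claim $\S_1$ and $\S_2$ must then be non-nested. Indeed, if $\S_1\subseteq\S_2$ we would have $\S_1\cup\S_2=\S_2$, and since $\S_2$ is already closed, $\S_1+\S_2=\S_2$, which is compatible with $B$ --- contradicting the existence of $P^\ast$. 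Symmetrically $\S_2\not\subseteq\S_1$, so $\S_1\cap\S_2\neq\S_1,\S_2$ and we have exhibited two closed sets with non-trivial intersection.

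For the ``if'' direction I start from closed $\S_1,\S_2$ with $\S_1\not\subseteq\S_2$ and $\S_2\not\subseteq\S_1$. All sets here are compact (closed subsets of the bounded $\NS$ polytope), so $\mbox{conv}(\S_1\cup\S_2)$ is again compact and convex, and any linear functional attains the same maximum over it as over $\S_1\cup\S_2$. The plan is to separate $\S_1+\S_2$ from this convex hull: if I can produce a box $P^\ast\in\S_1+\S_2$ with $P^\ast\notin\mbox{conv}(\S_1\cup\S_2)$, then the separating-hyperplane theorem yields coefficients $B(a,b|x,y)$ and a bound $K_B:=\max_{P\in\mbox{conv}(\S_1\cup\S_2)}B(P)$ with $B(P)\le K_B$ on all of $\S_1$ and $\S_2$ but $B(P^\ast)>K_B$. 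By construction the principle $B\le K_B$ is satisfied by both $\S_1$ and $\S_2$ yet violated inside $\S_1+\S_2$, i.e.\ it is unstable under composition.

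Everything therefore reduces to the strict inclusion $\mbox{conv}(\S_1\cup\S_2)\subsetneq\S_1+\S_2$, and this is the step I expect to be the crux. The point is that $\S_1+\S_2$ is generated from $\S_1\cup\S_2$ not only by convex mixing (shared randomness) but also by wirings, and a wiring $\W_A\otimes\W_B(\otimes_i P_i)$ depends on the boxes $P_i$ in a multilinear, adaptive manner rather than an affine one, so it has no reason to stay inside a fixed convex hull. I would try to exhibit an escaping box concretely: pick extreme points $e_1\in\S_1\setminus\S_2$ and $e_2\in\S_2\setminus\S_1$ --- these exist because a compact convex set is the closed convex hull of its extreme points, so non-nestedness forces an extreme point of each set lying outside the other --- and argue that some wiring of copies of $e_1$ and $e_2$ lands outside $\mbox{conv}(\S_1\cup\S_2)$. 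Proving in full generality that \emph{no} convex combination of boxes drawn from $\S_1$ and $\S_2$ can reproduce such an adaptive wiring is the delicate part; for the explicit non-nested sets provided by the zoo it can instead be certified by a single linear program testing membership in the (LP-characterized) hull of $\S_1\cup\S_2$, which both establishes the strict inclusion and hands over the separating inequality.
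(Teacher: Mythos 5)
Your first implication (instability $\Rightarrow$ non-nested sets) is correct and is essentially the contrapositive of the paper's own argument. The problem lies in the converse. You correctly identify that everything hinges on producing a box of $\S_1+\S_2$ together with a \emph{certificate} that it lies outside $\mbox{conv}(\S_1\cup\S_2)$, but you then leave exactly that step unproven: ``argue that some wiring of copies of $e_1$ and $e_2$ lands outside the hull'' is the entire content of the claim, and your fallback (an LP membership test on particular sets from the zoo) would only verify the statement for specific examples, not prove the theorem for arbitrary non-nested closed sets. Moreover, the reduction you propose is not even guaranteed to be available in the \emph{same} Bell scenario: nothing forbids $\mbox{conv}(\S_1\cup\S_2)$ from already being closed under wirings there, in which case the separating hyperplane you are after simply does not exist where you look for it, while the theorem still has to be proved.

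The paper closes this gap with a direct-sum construction that your proposal is missing. From $P_1\in\S_1\setminus\S_2$ and $P_2\in\S_2\setminus\S_1$, Hahn--Banach gives two inequalities: $B_1$ with $B_1(P)\le 0$ on $\S_2$ and $B_1(P_1)>0$, and $B_2$ with $B_2(P)\le 0$ on $\S_1$ and $B_2(P_2)>0$; normalize so that $\max_{P\in\S_1}B_1(P)=\max_{P\in\S_2}B_2(P)=1$, attained at $P_1'$, $P_2'$. One then defines $B_1\oplus B_2$ on the doubled scenario with $2n$ inputs per party, acting as $B_1$ on the first $n$ inputs, as $B_2$ on the last $n$, and vanishing on the cross terms. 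Any box in $\S_1$ (resp.\ $\S_2$) scores at most $1+0=1$ (resp.\ $0+1=1$) on $B_1\oplus B_2$, so both sets satisfy $(B_1\oplus B_2)(P)\le 1$; yet the trivial wiring that routes inputs $1,\dots,n$ to a copy of $P_1'$ and inputs $n+1,\dots,2n$ to a copy of $P_2'$ scores $B_1(P_1')+B_2(P_2')=2$. This makes the escape from the convex hull automatic and explicit --- precisely the ``delicate part'' your proposal defers --- at the modest price of moving to a larger Bell scenario.
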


\begin{proof}
Suppose that bipartite closed sets of correlations have a nested structure, and let principle $Z$ be satisfied by both $\S_1$ and $\S_2$. The set $\S_1+ \S_2$ is generated by the boxes in $\S_1\cup \S_2$, which, by hypothesis, is equal to either $\S_1$ or $\S_2$, and hence is compatible with $Z$.

Conversely, suppose that there exist two sets $\S_1$, $\S_2$ with $\S_1\cap\S_2\not=\S_1,\S_2$. Let $P_1\in \S_1$, $P_1\not\in\S_2$ and $P_2\in\S_2$, $P_2\not\in \S_1$. Since $P_1\not\in \S_2$, by the Hahn-Banach theorem, there exists a Bell inequality $B_1$ such that

\be
B_1(P)\leq 0 \mbox{ for all } P\in \S_2, B_1(P_1)>0.
\ee

\noindent Likewise, there exists a Bell inequality $B_2$ with

\be
B_2(P)\leq 0 \mbox{ for all } P\in \S_1, B_2(P_2)>0.
\ee

\noindent W.l.o.g., let us assume that $B_1,B_2$ are defined in a Bell scenario with $|{\cal X}|=|{\cal Y}|=n$; that $\max_{P\in \S_1}B_1(P)=\max_{P\in \S_2}B_2(P)=1$; and that the respective maxima are attained by the distributions $P_1',P_2'$. Then, the $|{\cal X}|=|{\cal Y}|=2n$ Bell inequality

\begin{eqnarray}
(B_1\oplus B_2)(a,b|x,y)=&&B_1(a,b|x,y), \mbox{ for } x,y=1,...,n\nonumber\\
=&& B_2(a,b|x-n,y-n) \mbox{ for } x,y=n+1,...,2n\nonumber\\
=&&0, \mbox{ otherwise}
\end{eqnarray}

\noindent can be seen to satisfy $\max_{P\in \S_i}(B_1\oplus B_2)(P)\leq 1$ for $i=1,2$. Suppose, however, that Alice and Bob share a copy of $P'_1$ and a copy of $P'_2$, and they wire them together in such a way that, when asked $x=1,...,n$ ($y=1,...,n$), Alice (Bob) inputs $x$ ($y$) in the first box, and, when asked $x=n+1,...,2n$ ($y=n+1,...,2n$), she (he) inputs $x-n$ ($y-n$) in the second box. In either case, they output the result of the box they probe. Calling $(P'_1\oplus P'_2)(a,b|x,y)$ the resulting distribution, it can be verified that $(B_1\oplus B_2)(P_1'\oplus P_2')=2$. It follows that the the Bell-type inequality

\be
(B_1\oplus B_2)(P)\leq 1
\ee

\noindent is not stable under composition.

\end{proof}

\subsection{The tripartite case}
\label{tripartite_case}

We will now discuss how the notion of wirings changes when we move away from the bipartite Bell scenario. A tentative first definition could be:

\begin{defin}{\textbf{Physical closure (I)}}\\
A set $\S$ of tripartite correlations is closed if

\begin{enumerate}
\item $\S$ is convex.
\item {\cal S} is symmetric under permutations of the three parties.
\item For any collection of boxes $\{P_i(a,b,c|x,y,z)\}_{i=1}^n\subset {\cal S}$ and any triple of wirings ${\cal W}_A,{\cal W}_B,{\cal W}_C$, the box ${\cal W}_A\otimes {\cal W}_B\otimes {\cal W}_C(\otimes_{i=1}^nP_i)$ belongs to ${\cal S}$.
\end{enumerate}

\end{defin}

Unfortunately, this definition identifies as closed sets of boxes which cannot represent the correlations of any consistent physical theory. Consider the set ${\cal B}$ of all tripartite boxes $P(a,b,c|x,y,z)$ such that the bipartitions $P(a,b|x,y)$, $P(a,c|x,z)$, $P(b,c|y,z)$ are local. This set is clearly convex and symmetric. It is also closed under local wirings: indeed, given a collection of boxes $\{P_i(a,b,c|x,y,z)\}_{i=1}^n$ $\subset$ ${\cal B}$, by definition the bipartitions $\{P_i(a,b|x,y)\}_{i=1}^n$ belong to $\L$, and consequently, no matter how Alice and Bob wire them, they cannot violate locality. Likewise for Alice and Charlie and Bob and Charlie. After a local wiring, the resulting tripartite box will hence be bipartite local, and so it will belong to ${\cal B}$.

Now, consider the following tripartite generalization of the $PR$-box, called Box 44 in \cite{extreme}:

\begin{eqnarray}
PR_3(a,b,c|x,y,z)=&&\frac{1}{8}\delta_{a\oplus b\oplus c,xyz}.
\end{eqnarray}

\noindent $PR_3(a,b|x,y)=PR_3(a,c|x,z)=PR_3(b,c|y,z)=\frac{1}{4}$ for $x,y,z,a,b=0,1$. Such is a product distribution (and hence local): $PR_3(a,b,c|x,y,z)$ thus belongs to ${\cal B}$. Note, though, that, if Charlie inputs $z=1$, obtains the result $c=0$ and announces this to Alice and Bob, then the latter would be sharing a perfect (non-local) PR-box. Obviously, $PR(a,b|x,y)\delta_{c,0}\not\in {\cal B}$. We have just shown that ${\cal B}$ is not closed under \emph{post-selections}. Since post-selections are physically legitimate operations, ${\cal B}$ cannot be a physical set.

Similarly, suppose that no post-selections are made, but, at the beginning of the experiment, $PR_3(a,b,c|x,y,z)$ is distributed in such a way that Alice receives the first box; and Bob, the second and third. Then Bob can perform the wiring indicated in Figure \ref{wiring_tri} in order to engineer a perfect PR-box between him and Alice. We hence conclude that ${\cal B}$ is also not closed under \emph{distribution and wirings}.

\begin{figure}
  \centering
  \includegraphics[width=8.5 cm]{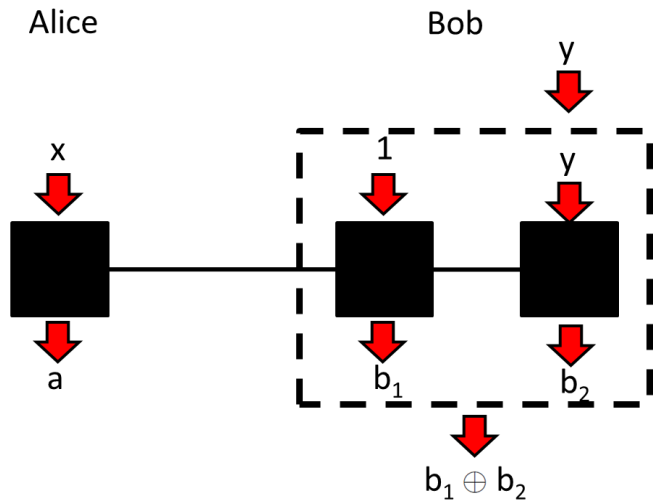}
  \caption{\textbf{Distribution and wirings.} If Bob is distributed, not just one, but two of the boxes corresponding to the triple $PR_3(a,b,c|x,y,z)$, he can wire them together to produce perfect PR-box correlations with Alice.}
  \label{wiring_tri}
\end{figure}

Any Bell experiment has a preparation stage and a measurement stage. The post-selection and distribution of boxes corresponds to the former; local wirings, to the latter. All these are valid physical operations, and the multipartite definition of closure must account for them.

\begin{defin}{\textbf{Physical closure (II)}}\\
\label{closure_tri}
A set $\S$ of tripartite correlations is closed if

\begin{enumerate}
\item $\S$ is convex.
\item ${\cal S}$ is closed under post-selections.
\item For any collection of boxes $\{P_i(a,b,c|x,y,z)\}_{i=1}^n\subset {\cal S}$, \emph{distributed arbitrarily to the three parties}, and any triple of wirings ${\cal W}_A,{\cal W}_B,{\cal W}_C$, the box ${\cal W}_A\otimes {\cal W}_B\otimes {\cal W}_C(\otimes_{i=1}^nP_i)$ belongs to ${\cal S}$.
\end{enumerate}

\end{defin}
\noindent We will be using this definition from now on.

One wonders if there exist at all non-trivial sets with genuine tripartite nonlocality besides $\Q$ and $\NS$. If such were not the case, then the existence of genuine tripartite nonlocality could already be interpreted as a physical principle to single out the quantum set. Regrettably, life is not that easy. Have a look at the following definition.

\begin{defin}{\textbf{The Time Ordered BiLocal set $\TOBL$}}\\
A tripartite distribution $P(a_1,a_2,a_3|x_1,x_2,x_3)$ belongs to $\TOBL$ iff it admits the Time Ordered BiLocal (TOBL) expansion \cite{extreme,TOBL}, i.e., if it admits a decomposition of the form:

\begin{eqnarray}
\label{eq:tobl}
  &P(a_1, a_2, a_3|x_1, x_2, x_3)=\nonumber\\
  &= \sum_\lambda p_\lambda^{i|jk} P(a_i|x_i, \lambda) P_{j\rightarrow k}(a_j, a_k|x_j, x_k,\lambda) \nonumber\\
  &= \sum_\lambda p_\lambda^{i|jk} P(a_i|x_i, \lambda) P_{j\leftarrow k}(a_j, a_k|x_j, x_k,\lambda)
\end{eqnarray}
for $(i,j,k)= (1,2,3), (2,3,1),(3,1,2)$, with the distributions
$P_{j\rightarrow k}$ and $P_{j\leftarrow k}$ obeying the
conditions
\begin{eqnarray}
 &P_{j \rightarrow k}(a_j|x_j,\lambda) = \sum_{a_k} P_{j \rightarrow k}(a_j, a_k|x_j, x_k,\lambda),\nonumber\\
 &P_{j \leftarrow k}(a_k|x_k,\lambda) = \sum_{a_j} P_{j \leftarrow k}(a_j, a_k|x_j, x_k,\lambda).
\end{eqnarray}

\end{defin}

As shown in \cite{TOBL}, this polytope is closed. Moreover, by definition, no distribution in $\TOBL$ can exhibit bipartite non-locality, and hence there are points in $\Q$ which are not present in $\TOBL$. On the other hand, some boxes in $\TOBL$ violate the Guess Your Neighbour's Input inequality \cite{gyni}, which cannot be violated by quantum mechanical systems. We thus have that the existence of $\TOBL$ answers affirmatively the multipartite versions of Question \ref{polytope} (existence of closed non-trivial polytopes) and Question \ref{intersection} (existence of closed sets with non-trivial intersection).

Sadly, the restriction of $\TOBL$ to two parties coincides with $\L$, and so the original -bipartite- questions remain open.

Another non-trivial tripartite closed set is the set of correlations generated by conducting measurements on quantum states invariant under partial transposition \cite{PPT} with respect to parties $A$, $B$, and $C$ (i.e., with respect to any bipartition of the three parties). For this type of states, no bipartite entanglement can be distilled between any groups of parties. Nevertheless, as it has been shown in \cite{ppt3}, there exist such quantum states which violate a tripartite Bell inequality (number 5) of the list of Sliwa~\cite{sliwa}. Let us designate the corresponding set by $\Q_{PPT}$. As one can check, the correlations arising from $\Q_{PPT}$ fulfill all conditions of Definition~\ref{closure_tri}, hence the set is closed under wirings.

We were not able to find more examples of non-trivial tripartite closed sets in the literature, although clearly another set can be generated by intersecting $\TOBL$ with $\Q$. The goal of reconstructing $\Q$ via device-independent principles motivates our next question:

\begin{quest}
\label{tri_ex}
Are there (non-trivial) tripartite supraquantum closed sets?
\end{quest}

\section{A zoo of computable closed sets}
\label{zoo}
In this section we will provide a collection of closed sets of correlations which admit an efficient characterization. We will divide them into two: those which can be characterized via linear programming \cite{linear_prog} (polytopes), and those for which the more sophisticated semidefinite programming (SDP) tools \cite{sdp} are necessary.

\subsection{Polytopes}

In the previous section we gave an example of a non-trivial tripartite closed polytope. It is certainly ironic that, in order to construct closed bipartite polytopes, we must turn to multipartite notions again. Consider the following definition:

\begin{defin}{\textbf{Ghost World}}\\
\label{ghost}
A bipartite distribution $P(a,b|x,y)$ belongs to Ghost World (denoted $\GW$) iff there exist tripartite distributions $P_A(a,a',b|x,x',y)$, $P_B(a,b,b'|x,y,y')\in \NS$ such that

\begin{enumerate}
\item $P_A(a,a',b|x,x',y)$ ($P_B(a,b,b'|x,y,y')$) is invariant under the exchange of the first two systems (the last two systems).
\item $P(a,b|x,y)=\sum_{a'}P_A(a,a',b|x,x',y)=\sum_{b'}P_B(a,b,b'|x,y,y')$.

\end{enumerate}

\end{defin}

A way to picture $\GW$ is to imagine that each party $A$ or $B$ has an associated ``ghost'', $A'$ or $B'$, which has similar experiences. Of course, ghosts do not exist, so there are no joint four-partite distributions for $AA'BB'$.

\begin{theo}

$\GW$ is closed under wirings.

\end{theo}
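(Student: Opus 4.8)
The plan is to prove closure under wirings by wiring the \emph{witness} distributions directly, forcing each party's ghost to mirror that party's wiring. Suppose $P_1,\dots,P_n\in\GW$, each certified by tripartite no-signalling distributions $P_A^i(a,a',b|x,x',y)$ and $P_B^i(a,b,b'|x,y,y')$ as in Definition \ref{ghost}, and fix arbitrary wirings $\W_A,\W_B$. I want to show that the wired box $P:=\W_A\otimes\W_B(\otimes_{i=1}^nP_i)$ lies in $\GW$ by exhibiting explicit witnesses. For the Alice-side witness I would form the tripartite box $\otimes_{i=1}^n P_A^i$ on the registers $(A,A',B)$ (with $A=(A_1,\dots,A_n)$, etc.) and define
\[
\hat P_A:=(\W_A)_A\otimes(\W_A)_{A'}\otimes(\W_B)_B\big(\otimes_{i=1}^n P_A^i\big),
\]
i.e. apply Alice's wiring to Alice's registers, the \emph{same} wiring $\W_A$ to the ghost registers $A'$, and Bob's wiring to Bob's registers. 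The Bob-side witness $\hat P_B$ is defined symmetrically from $\otimes_i P_B^i$.

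Two of the three defining requirements of $\GW$ for the pair $(\hat P_A,\hat P_B)$ are then routine. First, $\hat P_A$ is no-signalling: $\otimes_i P_A^i$ is a tripartite no-signalling box, and since $\NS$ is closed under wirings, applying one wiring per party keeps the result in $\NS$. Second, $\hat P_A$ is invariant under exchange of $A$ and $A'$, because the pre-wiring box $\otimes_i P_A^i$ is invariant under the simultaneous swap $(A_i\leftrightarrow A_i')$ (requirement 1 for each $P_A^i$) and I apply \emph{identical} wirings to the $A$ and the $A'$ registers, so the symmetry survives; the analogous statements hold for $\hat P_B$.

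The heart of the argument, and the step I expect to be the main obstacle, is the marginal condition $\sum_{a'}\hat P_A=P=\sum_{b'}\hat P_B$. Since the three wirings act on disjoint registers they commute, giving
\[
\sum_{a'}\hat P_A=(\W_A)_A\otimes(\W_B)_B\Big(\sum_{a'}(\W_A)_{A'}\big(\otimes_i P_A^i\big)\Big).
\]
The inner object is a wiring of the ghost registers followed by discarding their output; because $\otimes_i P_A^i$ does not signal from $A'$ to $(A,B)$, this ``process-and-discard'' operation on $A'$ cannot alter the $(A,B)$-marginal, which by requirement 2 equals $\otimes_i\big(\sum_{a'}P_A^i\big)=\otimes_i P_i$. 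Hence the inner object is exactly $\otimes_i P_i$, and applying $\W_A\otimes\W_B$ returns $P$. The delicate point is that wirings are \emph{adaptive}: which ghost box is probed next, and with what input, may depend on earlier ghost outputs, so ``discard the final ghost output'' is not literally ``never touch the ghosts.'' I would make this rigorous by observing that no-signalling forces the $(A,B)$-marginal to be independent of the inputs fed to the $A'$ boxes and of the order in which they are probed; summing over the final ghost output then amounts to summing over every branch of the adaptive decision tree, each branch leaving the same $(A,B)$-marginal, so the total is unchanged. The identical computation gives $\sum_{b'}\hat P_B=P$, so both witnesses are consistent with the \emph{same} bipartite box $P$, which is precisely what condition 2 of Definition \ref{ghost} demands.

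Finally I would dispatch convexity and party-symmetry in one line each, since the theorem statement is naturally read within the broader claim that $\GW$ is a physically closed set: convex combinations of valid witnesses are again valid witnesses (no-signalling, the two swap symmetries, and the marginal conditions are all preserved under mixing), and exchanging Alice with Bob simply interchanges the roles of $P_A$ and $P_B$ in Definition \ref{ghost}. The only genuinely nontrivial ingredient is the commutation-and-discard argument above, so that is where I would concentrate the rigor.
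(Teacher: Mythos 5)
Your proof is correct and follows essentially the same route as the paper's own (sketched) argument: the ghost registers $A'$ mimic Alice's wiring $\W_A$, yielding a wired tripartite extension that inherits the swap symmetry and whose $(A,B)$-marginal is controlled by no-signalling. Your treatment is in fact more explicit than the paper's one-paragraph sketch, particularly on the adaptive ``process-and-discard'' step, but it is the same idea.
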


\begin{proof}
The proof is sketched in Figure \ref{ghost_fig}. Call $P_f(a,b|x,y)$ the final distribution, and note that, for each pair of boxes $P_i(a,b|,xy)$ involved in the wiring, there is a tripartite extension $P_i(a,a',b|x,x',y)$ satisfying the conditions of Definition \ref{ghost}. No matter how complicated Alice's wiring is, Alice's ghost can always mimic it, and hence there exists a tripartite distribution $P_f(a,a',b|x,x',y)$, with $P_f(a,a',b|x,x',y)=P_f(a',a,b|x',x,y)$ and $\sum_{a'}P_f(a,a',b|x,x',y)=P_f(a,b|x,y)$. Likewise, $P_f(a,b|x,y)$ admits a symmetric extension of Bob's part, and hence $P_f(a,b|x,y)\in \GW$.

\end{proof}

\begin{figure}
  \centering
  \includegraphics[width=8.5 cm]{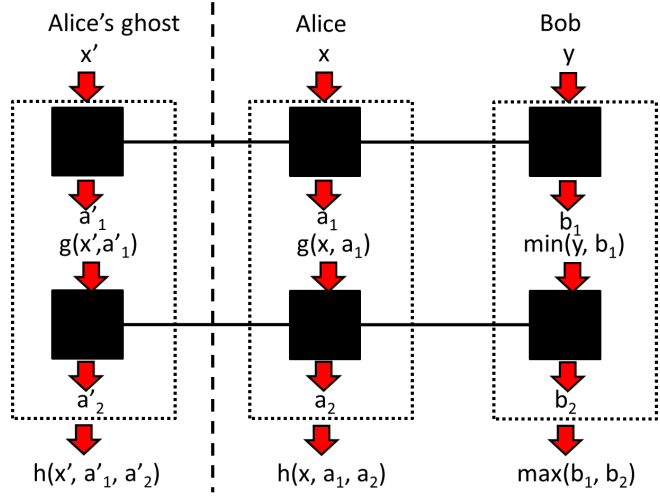}
  \caption{\textbf{Closure of Ghost World.} Any wiring of Alice's can be mimicked by her ghost, thus giving rise to a tripartite distribution symmetric under the exchange of Alice and her ghost.}
  \label{ghost_fig}
\end{figure}

\noindent Clearly, one can define analogous closed sets $\GW_n$ by demanding that Alice and Bob admit, not just one, but $n$ ghosts each. All such sets are polytopes strictly contained in $\NS$: this can be seen from the fact that, due to monogamy \cite{lluis}, no distribution in $\GW_n$ can violate  Bell inequalities with $|{\cal X}|=n+1$.

Note that the sets $\GW_n$ do not admit a straightforward extension to the multipartite case, since they are not closed under post-selection or distribution and wirings. It is easy, though, to modify Definition \ref{ghost} so that they are.

\begin{defin}{\textbf{Twin World}}\\
\label{twin}
A tripartite distribution $P(a,b,c|x,y,z)$ belongs to Twin World (denoted $\TW$) iff there exists an $AA'BB'CC'$-partite distribution $P(a,a',b,b',c,c'|x,x',y,y',z,z')$ $\in \NS$ such that

\begin{enumerate}
\item $P(a,a',b,b',c,c'|x,x',y,y',z,z')$ is invariant under the exchange of $A$ and $A'$; $B$ and $B'$; and $C$ and $C'$.
\item $P(a,b,c|x,y,z)=\sum_{a',b',c'}P(a,a',b,b',c,c'|x,x',y,y',z,z')$.

\end{enumerate}

\end{defin}

\noindent Now the intuition is that each inhabitant of Twin World has a twin brother or sister who, contrary to ghosts, simultaneously co-exist.

The proof of the closure of Twin World is analogous to that of Ghost World, and hence it will not be repeated. As before, one can define new sets $\TW_n$ by imposing that each party has $n$ twins, instead of just one.

\subsection{Shadows of spectrahedra}

In \cite{quantum2}, the authors define a sequence of SDP approximations $\Q^n$ to the bipartite quantum set $\Q$ with the property $\Q^1\supset \Q^2\supset...\supset \Q$. In the following we will prove that each of these sets is closed under wirings, as well as some intermediate sets also considered in the literature.

Assign ``symbolic projectors'' $\{E^x_a,F^y_b\}$ to any possible local event $(x,a)$ or $(y,b)$. We will call $A$ and $B$ the set of projectors associated to Alice's and Bob's interactions. By $A^mB^n$, we will denote the set of all sequences of products of projectors of the form $E^{x_1}_{a_1}E^{x_2}_{a_2}...E^{x_{m'}}_{a_{m'}}F^{y_1}_{b_1}F^{y_2}_{b_2}...F^{y_{n'}}_{b_{n'}}$, where $m'\leq m$, $n'\leq n$. Note that, if two consecutive projectors correspond to different outcomes of the same measurement, the corresponding sequence will be equal to 0. Also, taking $m'=n'=0$, we have that $\id\in A^mB^n$, for all $m,n$. We will assume that these symbolic operators can be conjugated and multiplied at a formal level, with the peculiarity that all the $F$'s commute with all the $E$'s.

\begin{defin}{${\cal O}$-positivity}\\
\label{O_posit}
Let ${\cal O}$ be some (symmetric) union of sets of operators of the type $A^mB^n$ containing $A$ and $B$, and denote by $\overline{{\cal O}}$ its associated (real) vector space, i.e., the set of all linear combinations of the elements of ${\cal O}$. Likewise, denote by $\overline{{\cal O}}^2$ the set of linear combinations of the elements of ${\cal O}\cdot {\cal O}$ (products of two elements of ${\cal O}$). We will say that Alice's and Bob's no-signalling shared probability distribution $P(a,b|x,y)$ is ${\cal O}$-positive if there exists a linear functional $L:\overline{{\cal O}}^2\to \R$ such that

\begin{enumerate}
\item
\label{primer}
$L(1)=1$.
\item
$L(ff^\dagger)\geq 0$ for any $f\in\overline{{\cal O}}$.
\item
\label{ulti}
$P(a,b|x,y)=L(E^x_a\cdot F^y_b)$.
\end{enumerate}
\end{defin}

For any ${\cal O}$, the set $\Q^{\cal O}$ of ${\cal O}$-positive distributions is convex and contains $\Q$. Moreover, deciding ${\cal O}$-positivity can be formulated as a semidefinite program whose complexity scales polynomially with the number of measurement settings\footnote{Indeed, notice that the constraint $L(ff^\dagger)\geq 0$ is equivalent to demanding that the matrix $\Gamma$, defined as $\Gamma_{s,t}\equiv L(S^\dagger T)$, is positive semidefinite.}. Actually, note that, if ${\cal O}=\bigcup_{m=0}^N A^mB^{N-m}$, then $\Q^{\cal O}= \Q^N$, with $\Q^N$ defined as in \cite{quantum2}.

\begin{theo}
\label{equivalence}
For any ${\cal O}$, a given set of bipartite correlations $P(a,b|x,y)$ is ${\cal O}$-positive iff there exist projectors $\{E^x_a,F^y_b\}$ and a quantum state $\rho$ such that

\begin{enumerate}
\item
$\tr(\rho S^\dagger T)=\tr(\rho U^\dagger V)$, if $s^\dagger t=u^\dagger v$, where $S,T,U,V$ are sequences of products of the projectors $\{E^x_a,F^y_b\}$ corresponding to the elements $s,t,u,v\in {\cal O}$.

\item
$E^x_aE^x_{a'}=F^y_bF^y_{b'}=0$ for $a\not=a',b\not=b'$.

\item
$\sum_{a}E^x_a=\sum_{b}F^y_b=\id$.
\item
$P(a,b|x,y)=\tr(\rho E^x_a F^y_b)$,
\end{enumerate}

\end{theo}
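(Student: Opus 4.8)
The plan is to prove the equivalence by a Gelfand--Naimark--Segal (GNS)-type construction, handling the two implications separately and keeping careful track of which conditions are cosmetic and which require work.

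For the easy direction (concrete representation $\Rightarrow$ ${\cal O}$-positivity), suppose we are given projectors $\{E^x_a,F^y_b\}$ and a state $\rho$ satisfying conditions 1--4. I would define the functional on $\overline{{\cal O}}^2$ by $L(w)=\tr(\rho W)$, where $W$ is obtained by replacing each symbolic projector in a representative of $w$ by its concrete counterpart. The only thing to verify is well-definedness: two symbolic representatives of the same element of $\overline{{\cal O}}^2$ differ through the defining relations ($E^x_aE^x_{a'}=0$, $(E^x_a)^2=E^x_a$, $[E^x_a,F^y_b]=0$ and completeness), and condition 1 states precisely that $\tr(\rho\,\cdot\,)$ cannot distinguish symbolically equal products $s^\dagger t=u^\dagger v$; linearity extends this to all of $\overline{{\cal O}}^2$. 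Once $L$ is well defined, property \ref{primer} follows from $\tr(\rho)=1$, the positivity property from $L(ff^\dagger)=\tr(\rho\,FF^\dagger)\geq 0$ together with $\rho\geq 0$, and property \ref{ulti} from condition 4.

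For the converse (${\cal O}$-positivity $\Rightarrow$ concrete representation) I would run GNS on $L$. First form the moment matrix $\Gamma_{s,t}=L(s^\dagger t)$ indexed by $s,t\in{\cal O}$; since $s^\dagger\in{\cal O}$ this lands in $\overline{{\cal O}}^2$ and is well defined, and the positivity property makes $\Gamma$ positive semidefinite (as in the footnote). Factorising gives vectors $\{\psi_s\}_{s\in{\cal O}}$ in a real Hilbert space ${\cal H}$ with $\langle\psi_s,\psi_t\rangle=\Gamma_{s,t}$, and the element $1\in{\cal O}$ furnishes a distinguished unit vector $\Omega=\psi_1$ with $\langle\Omega,\Omega\rangle=L(1)=1$. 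Rather than the naive left-multiplication formula, I would define $\hat E^x_a$ as the orthogonal projection onto the closed span of $\{\psi_{E^x_a s}:s\in{\cal O},\ E^x_a s\in{\cal O}\}$ (similarly $\hat F^y_b$), and set $\rho=\proj{\Omega}$. Mutual orthogonality $\hat E^x_a\hat E^x_{a'}=0$ for $a\neq a'$ (condition 2) is immediate, since $\langle\psi_{E^x_a s},\psi_{E^x_{a'}t}\rangle=L(s^\dagger E^x_aE^x_{a'}t)=L(0)=0$; and a one-line idempotency computation shows $\hat E^x_a\Omega=\psi_{E^x_a}$, $\hat F^y_b\Omega=\psi_{F^y_b}$, whence $\tr(\rho\,\hat E^x_a\hat F^y_b)=\langle\psi_{E^x_a},\psi_{F^y_b}\rangle=L(E^x_aF^y_b)=P(a,b|x,y)$, giving condition 4. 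The target of the whole construction is the identity $\hat S\Omega=\psi_s$ for every $s\in{\cal O}$, from which $\tr(\rho\,S^\dagger T)=\langle\hat S\Omega,\hat T\Omega\rangle=L(s^\dagger t)$; condition 1 then follows because $L$ assigns equal values to symbolically equal products.

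The hard part is establishing $\hat S\Omega=\psi_s$ for products and securing completeness (condition 3), both of which are endangered by the fact that ${\cal O}$ is only a union of truncated sets $A^mB^n$ and is not closed under multiplication. I would prove $\hat S\Omega=\psi_s$ by induction on the word length, peeling off one projector at a time; at each step one checks that the projection of an intermediate vector $\psi_w$ equals $\psi_{g w}$ using idempotency, and the key point is that every moment invoked is of the form $L(r^\dagger g w)$ with $r,\,gw\in{\cal O}$, so it lies in $\overline{{\cal O}}^2$ exactly because the $E$'s commute with the $F$'s and the degrees add up to at most the degree of a product of two ${\cal O}$-elements. Completeness is the genuine obstacle: verifying $\sum_a\hat E^x_a=\id$ on a top-degree vector $\psi_t$ would require the relation $\sum_aL(r^\dagger E^x_a t)=L(r^\dagger t)$ at degree exceeding that of $\overline{{\cal O}}^2$, which is unavailable. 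I would repair this by the standard dilation trick: the $\{\hat E^x_a\}_a$ are already mutually orthogonal projections with $\sum_a\hat E^x_a\,\Omega=\Omega$, so enlarging ${\cal H}$ (or reassigning the residual defect $\id-\sum_a\hat E^x_a$ to a single designated outcome) turns them into a full projective measurement without disturbing their action on $\Omega$ or on the lower-degree intermediate vectors, since those lie in the range of $\sum_a\hat E^x_a$. This preserves $\hat S\Omega=\psi_s$ and hence the key identity on all of $\overline{{\cal O}}^2$, completing the converse.
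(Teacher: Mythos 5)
Your proposal is correct, but note that the paper does not prove this theorem itself — it defers entirely to the construction in \cite{quantum2}, and your GNS-type argument (moment matrix $\Gamma_{s,t}=L(s^\dagger t)$, vectors from its Gram factorization, $\hat E^x_a$ as projections onto the spans of $\{\psi_{E^x_a s}\}$, and the repair of $\sum_a \hat E^x_a=\id$ by absorbing the defect into one outcome) is essentially that standard proof. You correctly isolate the one genuine subtlety, namely that completeness can only be verified on vectors of non-maximal degree, and your fix works because the intermediate vectors $\psi_w$ with $E^x_a w\in{\cal O}$ satisfy $\psi_w=\sum_{a'}\psi_{E^x_{a'}w}$ and hence lie in the range of $\sum_{a'}\hat E^x_{a'}$.
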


\noindent For a proof, see \cite{quantum2}.

We will next prove this section's main theorem.

\begin{theo}
\label{closed_hierarchy}
Let ${\cal O}$ be the union of some of the sets $A^mB^n$, including $A$ and $B$. Then, $\Q^{\cal O}$ is closed under wirings.
\end{theo}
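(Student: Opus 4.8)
The plan is to reduce closure under wirings to the operator characterization of Theorem~\ref{equivalence}. First I would invoke that theorem to replace each generator $P_i$ of the wiring by a concrete representation $(\rho_i,\{E^{x,i}_a\},\{F^{y,i}_b\})$ on a Hilbert space $\H_i$ satisfying its conditions 1--4 with respect to ${\cal O}$. I would then build a candidate representation of the wired box on $\H=\bigotimes_i\H_i$ with state $\rho=\bigotimes_i\rho_i$. Since a deterministic wiring ${\cal W}_A$ queries Alice's share of each box exactly once, in an adaptively chosen order, its effective measurement element for effective input $X$ and output $A$ is $\tilde E^X_A=\sum_{\vec a:\,g(X,\vec a)=A}\bigotimes_i E^{x_i(X,\vec a),i}_{a_i}$, with the inputs $x_i(X,\vec a)$ fixed by the adaptive rule and $g$ the output function; analogously one defines $\tilde F^Y_B$ for Bob. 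By convexity of $\Q^{\cal O}$ it suffices to treat such deterministic wirings, the general case following by taking mixtures.

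Next I would verify conditions 2 and 3 of Theorem~\ref{equivalence} for these effective operators. Orthogonality $\tilde E^X_A\tilde E^X_{A'}=0$ for $A\neq A'$ holds as a genuine operator identity: two histories sharing the same $X$ but reaching distinct outputs first diverge on some box which, up to that point, has received identical inputs in both, so on that tensor factor the two underlying projectors are orthogonal and the full product vanishes. Completeness $\sum_A\tilde E^X_A=\id$ follows by telescoping the underlying completeness relations along the adaptive tree, and similarly for Bob. The decisive point is condition 1, the relaxed consistency, and this is where I expect the main difficulty to lie.

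To establish condition 1 I would set $L_f(\tilde w):=\tr(\rho\,\tilde W)$ for symbolic words $\tilde w$ in the effective projectors and show that this descends to a well-defined linear functional on $\overline{{\cal O}}^2$ built from them. The key structural observation is that each effective projector contributes exactly one underlying projector to each box, so the substitution map $\Phi$ sending an effective word to its operator expansion carries every element of ${\cal O}\cdot{\cal O}$ (for the effective operators) into $\bigotimes_i ({\cal O}\cdot{\cal O})$ for the per-box operators: a two-block effective product with $p$ (resp.\ $p'$) factors $\tilde E$ and $q$ (resp.\ $q'$) factors $\tilde F$ expands, box by box, to $(E^pF^q)(E^{p'}F^{q'})\in{\cal O}\cdot{\cal O}$, precisely because the effective and underlying $A^mB^n$-hierarchies are indexed by the same set. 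I would then check that $\Phi$ respects all defining relations of the effective algebra: orthogonality and completeness because they already hold as identities in the underlying relation-algebra (by the argument above), and commutation of effective $\tilde E$'s past $\tilde F$'s because, term by term in the internal-output sums, it reduces to the underlying moves $E^x_aF^y_b\leftrightarrow F^y_bE^x_a$. Hence $\Phi$ descends to the relation-quotients. Finally, since condition 1 for box $i$ states exactly that $\ell_i(w):=\tr(\rho_i W)$ is a well-defined functional on $\overline{{\cal O}}^2$, the tensor functional $\bigotimes_i\ell_i$ is well-defined on $\bigotimes_i\overline{{\cal O}}^2$, and therefore $L_f=(\bigotimes_i\ell_i)\circ\Phi$ is well-defined on $\overline{{\cal O}}^2$, which is condition 1 for the wired representation.

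It then remains to collect the easy facts: $L_f(1)=\tr(\rho)=1$; $L_f(\tilde f\tilde f^\dagger)=\tr(\rho\,\tilde F\tilde F^\dagger)\ge 0$ by positivity of $\rho$; and $L_f(\tilde E^X_A\tilde F^Y_B)=\tr(\rho\,\tilde E^X_A\tilde F^Y_B)$ equals the wired distribution $P_f(A,B|X,Y)$ by construction. By Definition~\ref{O_posit} this shows $P_f\in\Q^{\cal O}$. The hard part throughout is the bookkeeping behind condition 1: one must ensure that the consistency, which holds only up to the ${\cal O}$-level, survives when short words in the effective projectors expand into longer words in the underlying ones. The resolution is exactly the degree-matching furnished by the common $A^mB^n$ index set, together with the observation that all the potentially dangerous reorderings take place within a single box, where the underlying relaxed consistency already applies, and that the trace factorizes across boxes so these per-box guarantees multiply together.
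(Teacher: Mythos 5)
Your proof is correct and follows essentially the same route as the paper's: both reduce to the operator characterization of Theorem~\ref{equivalence}, represent the wired box on $\bigotimes_i\H_i$ with state $\bigotimes_i\rho_i$ and adaptively built product projectors, obtain orthogonality from the first point of divergence of two measurement histories, completeness by induction along the adaptive tree, and the consistency condition from factorization of the trace across boxes. The only difference is organizational: the paper splits a wiring into a complete deterministic measurement strategy followed by an identification of outcomes and proves a separate lemma for each, whereas you fold the identification into the definition of the effective projectors and spell out the well-definedness of condition 1 (the degree-matching under the substitution map) that the paper dispatches in one line.
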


\begin{proof}

Any general wiring of different boxes $\{P_i(a,b|x,y)\}_{i=1}^N$ is the result of composing two operations:

\begin{enumerate}
\item
A \emph{complete deterministic measurement strategy}, that starts by performing some measurement $X$ on some box and continues through the sequential measurement of \emph{all} the remaining boxes, where which box to measure and which interaction to apply to it are a function of the previous measurement outcomes.
\item
The identification of some of the different outcomes of the former measurement strategies.
\end{enumerate}

In order to prove that a certain set of correlations containing the local set is closed under wirings, it is thus enough to show that it is closed under the previous two operations. The next two lemmas do the job.

\begin{lemma}
Let $P(a,b|x,y)$ be ${\cal O}$-positive for some set ${\cal O}$. Then, any identification $P'(a,b|x,y)$ of the outcomes of $P(a,b|x,y)$ is also ${\cal O}$-positive.
\end{lemma}

\begin{proof}

It is enough to prove that, for some input $\bar{x}$, the identification of any two different outcomes generates a new ${\cal O}$-positive distribution. Take thus two outputs $\tilde{a}\not=\tilde{a}'$, rename them $\bar{a}$, and consider the reduced distribution $P'(a,b|x,y)$, with $P'(\bar{a},b|\bar{x},y)=P(\tilde{a},b|\bar{x},y)+P(\tilde{a}',b|\bar{x},y)$ and $P'(a,b|x,y)=P(a,b|x,y)$, for $a\not=\tilde{a},\tilde{a}'$ or $x\not=\bar{x}$. By Theorem \ref{equivalence}, we have to show that there exists a quantum state and a set of measurements satisfying conditions 1-4.

Easy: let $\rho, \{E^x_a,F^y_b\}$ be the quantum states and operators associated to $P(a,b|x,y)$. Now, take $\rho'=\rho$, $(F^y_{b})'=F^y_b$, $(E^x_a)'=E^x_a$, for $a\not=\tilde{a},\tilde{a}'$ or $x\not=\bar{x}$, and $(E^{\bar{x}}_{\bar{a}})'=E^{\bar{x}}_{\tilde{a}}+E^{\bar{x}}_{\tilde{a}'}$. Then it is clear that the state $\rho'$ and the operators $\{(E^x_a)',(F^y_b)'\}$ satisfy the conditions of Theorem \ref{equivalence} for $P'(a,b|x,y)$.

\end{proof}

\begin{lemma}
Given a set of operators ${\cal O}$, let $\{P_i(a,b|x,y)\}_{i=1}^N$ be any set of ${\cal O}$-positive boxes. Then, any complete deterministic wiring of them is ${\cal O}$-positive.

\end{lemma}

\begin{proof}

Consider a measuring strategy $\bar{x}$, one of whose possible outcomes is $\vec{a}=(i_1,x_1,a_1)\to (i_2,x_2,a_2)\to ...\to (i_N,x_N,a_N)$, understood as the outcome corresponding to measuring $x_1$ and obtaining outcome $a_1$ in box $i_1$, followed by a measurement $x_2$ that outputs $a_2$ in box $i_2$, etc. Call $a(\vec{a},i)$ the output in $\vec{a}$ corresponding to the box $i$; and $x(\vec{a},i)$, the corresponding measurement $x_i$. Given the representation $\{\rho^i,E^{i,x}_a,F^{i,y}_b\}_{i=1}^N$ of the boxes $\{P_i\}_{i=1}^N$, define the normalized quantum state $\rho\equiv\otimes_{i=1}^N\rho^i$ and the projector operators

\begin{eqnarray}
&&E^{\bar{x}}_{\vec{a}}\equiv\otimes_{i=1}^NE^{i,x(\vec{a},i)}_{a(\vec{a},i)},\nonumber\\
&&F^{\bar{y}}_{\vec{b}}\equiv\otimes_{i=1}^NF^{i,y(\vec{b},i)}_{b(\vec{b},i)}.
\end{eqnarray}

Then it is immediate to check that

\be
P(\vec{a},\vec{b}|\bar{x},\bar{y})=\prod_{i=1}^NP_i(a(\vec{a},i),b(\vec{b},i)|x(\vec{a},i),y(\vec{b},i))=\tr(\rho E^{\bar{x}}_{\vec{a}}E^{\bar{y}}_{\vec{b}}),
\ee
\noindent so condition 4 of Theorem \ref{equivalence} is satisfied. Now, suppose that $\vec{a},\vec{a}'$ correspond to different outcomes of the same deterministic strategy $\bar{x}$. This implies that, at some point in the measuring process, some measurement $x$ was performed on some box $k$ that output different results for $\vec{a},\vec{a}'$, i.e., $a(\vec{a},k)\not=a(\vec{a}',k)$, and so $E^{k,x(\vec{a},k)}_{a(\vec{a},k)}\cdot E^{k,x(\vec{a}',k)}_{a(\vec{a}',k)}=0$. Thus $E^{\bar{x}}_{\vec{a}}E^{\bar{x}}_{\vec{a}'}=0$ for any pair of sequential outcomes $\vec{a}\not=\vec{a}'$ corresponding to the same measurement strategy $\bar{x}$. The same holds for the $F$'s, and so condition 2 of Theorem \ref{equivalence} is likewise respected.

Let $\bar{x}$ be any deterministic (possibly incomplete) measurement strategy. For any outcome $\vec{a}$ of $\bar{x}$, define $E^{x(\vec{a},i)}_{a(\vec{a},i)}=\id_i$ in case the box $i$ was not measured, and denote by $E^{\bar{x}}_{\vec{a}}$ the projector $E^{\bar{x}}_{\vec{a}}=\otimes_{i=1}^NE^{x(\vec{a},i)}_{a(\vec{a},i)}$. We will next prove that $\sum_{\vec{a}} E^{\bar{x}}_{\vec{a}}=\id$. The proof will proceed by induction on the maximum length $n$ of the sequences of outcomes. If $n=1$, $\sum_{\vec{a}} E^{\bar{x}}_{\vec{a}}=\sum_{a}E^{x(\vec{a},i)}_a\otimes \id_{1,...,i-1,i+1,...n}$, for some measurement $x$ over box $i$. By hypothesis the last sum must then be equal to the identity. Now, let us assume that the property holds for strategies of $n$ consecutive measurements, and take a measurement strategy $\bar{x}$ of $n+1$ consecutive measurements. Then we can always decompose $\bar{x}$ as some strategy $\bar{x}'$ (where no more than $n$ measurements have to be made in order to announce an outcome) plus a new measurement $x(\vec{a}')$ in box $k(\vec{a}')$ depending on the possible outcomes $\vec{a}'$ of $\bar{x}'$. Note that, for some possible outputs $\vec{a}'$ of $\bar{x}$, $x(\vec{a}')$ can be trivial. Thus we have that

\begin{eqnarray}
&&\sum_{\vec{a}}E^{\bar{x}}_{\vec{a}}=\sum_{\vec{a}'}\sum_{a} E^{\bar{x}'}_{\vec{a}'}\otimes E^{k(\vec{a}'),x(\vec{a}')}_a=\nonumber\\
&&=\sum_{\vec{a}'}E^{\bar{x}'}_{\vec{a}'}\otimes \id_{k(\vec{a}')}=\id.
\end{eqnarray}

\noindent Condition 3 of Theorem \ref{equivalence} is therefore fulfilled by complete deterministic strategies.

Finally, take any operator identity $s^\dagger t=u^\dagger v$, for $s,t,u,v\in {\cal O}$. Then,

\begin{eqnarray}
\tr(\rho S^\dagger T)&&=\prod_{i=1}^N \tr(\rho^i (S^i)^\dagger T^i)=\prod_{i=1}^N \tr(\rho^i (U^i)^\dagger V^i)=\nonumber\\
&&=\tr(\rho U^\dagger V).
\end{eqnarray}

\noindent Hence condition 1 of Theorem \ref{equivalence} holds as well. It follows that $P(\vec{a},\vec{b}|\bar{x},\bar{y})$ is ${\cal O}$-positive.

\end{proof}

\end{proof}

This closure result can be easily extended to other bipartite sets of correlations defined via SDP hierarchies where certain extra restrictions on $\Gamma$ are imposed. For instance, let $s_A,t_A$ ($s_B,t_B$) denote sequences of Alice's (Bob's) operators, and, consider the constraints

\be
L(S_AS_B)=L(S^\dagger_AS_B), L(S_AS_B)=L(S_AS^\dagger_B),
\label{extra}
\ee

\noindent invoked by Moroder \emph{et al.} \cite{moroder} for the characterization of the correlations attainable by quantum states invariant under partial transposition \cite{PPT}. For any set of operators ${\cal O}$, the set of all correlations in $\Q^{\cal O}$ satisfying the above restriction will be denoted as $\Q^{\cal O}_{PPT}$. It is clear that, for all sets ${\cal O}$ satisfying the conditions of Theorem \ref{closed_hierarchy}, $\Q^{\cal O}_{PPT}$ is closed under wirings, since the above relations are ``inherited'' by the measurement operators corresponding to the wirings. Let ${\cal O} = A^1B^1$, and consider the $I_{3322}$ inequality~(\ref{I3322_def}). In this case, we got the bound 0.15 (which is lower than the quantum maximum~$\approx 0.250875$), proving that the set $\Q^{A^1B^1}_{PPT}$ is non-trivial even in the simplest meaningful ternary-input binary-output setting.

In the multipartite case, one can as well define sets of operators for, say, Alice, Bob and Charlie. Consider those of the form ${\cal O}_N\equiv A^NB^NC^N$, whose associated sets $\tilde{{\cal Q}}^N\equiv \Q^{{\cal O}_N}$ were proposed in \cite{graph} to characterize quantum contextuality. In \cite{almost}, it is shown that $\tilde{{\cal Q}}^1$, the ``almost quantum'', set is closed in the multipartite sense specified by definition \ref{closure_tri}. It is easy to adapt the proof to the case of general $N$. Let us also recall the tripartite $\Q_{PPT}$ set introduced in Sec.~\ref{tripartite_case}. The counterpart of this set associated to the hierarchy on any level $N$ defines a valid closed set as well. The fact that this set is distinct from $\Q_{PPT}$ for some $N$ is indicated by the result of Ref.~\cite{moroder} proving that at least level $N=3$ of the hierarchy is required to recover the maximum $\Q_{PPT}$ value of inequality number 5 from the list of Ref.~\cite{sliwa}.

In Section \ref{questions} we considered the set $\Q_{+}$ of all quantum distributions attainable with maximally entangled quantum states. We did not explain, however, how to characterize this set. We will conclude by introducing a hierarchy of semidefinite programs to bound $\Q_{+}$. As in the previous examples, all such relaxations define closed sets of correlations, and therefore constitute useful tools to explore nonlocality distillation.

Suppose then that Alice and Bob share arbitrarily many copies of the maximally entangled state, or, equivalently, a $d$-dimensional maximally entangled state $\ket{\Psi_d}=\frac{1}{\sqrt{d}}\sum_{j=1}^d\ket{j}\otimes\ket{j}$. They probe these states with the Positive Operator Valued Measures $E^x_a\geq 0, F^y_b\ge 0$, with $\sum_aE^x_a=\sum_bF^y_b=\id$. Then it can be verified that

\begin{eqnarray}
&&P(a,b|x,y)=\bra{\Psi}E^x_a\otimes F^y_b\ket{\Psi}=\frac{1}{d}\tr(E^x_a(F^y_b)^T),\nonumber\\
&&P(a|x)=\bra{\Psi}E^x_a\otimes \id_d\ket{\Psi}=\frac{1}{d}\tr(E^x_a),\nonumber\\
&&P(b|y)=\bra{\Psi}\id_d\otimes F^y_b\ket{\Psi}=\frac{1}{d}\tr((F^y_b)^T),
\end{eqnarray}

\noindent with the symbol $\bullet^T$ denoting transposition with respect to the basis  $\{\ket{j}\}_{j=1}^d$.

We relax the previous relation to $P(a,b|x,y)=\Tr(E^x_aF^y_b)$, where $\Tr$ denotes a normalized tracial state, i.e., $\Tr(\id)=1$, $\Tr(AB)=\Tr(BA)$ for all operators $A,B$. Note that we have redefined Bob's operators as $F^y_b\to (F^y_b)^T$. This problem can be attacked using a modified version of Burgdorf \& Klep's hierarchy of SDPs \cite{klep}. We arrive at the following definition:

\begin{defin}
\label{Qplus_hier}
Let ${\cal P}_k$ be the set of all monomials of the symbolic operators $E^x_a,F^y_b$, with $E^x_aF^y_b\not=F^y_bE^x_a$ and $\sum_aE^x_a=\sum_b F^y_b=\id$, up to degree $k$. A bipartite distribution $P(a,b|x,y)$ belongs to $\Q^{k}_{+}$ iff there exists a linear functional $L:\overline{{\cal P}}^2_k\to \R$ such that

\begin{enumerate}
\item
\label{primer2}
$L(1)=1$.
\item
\label{positive1}
$L(ff^\dagger)\geq 0$ for any $f\in\overline{{\cal P}}_k$.

\item
\label{positive2}
$L(fE^x_af^\dagger),L(fE^x_af^\dagger F^y_b)\geq 0$ for any $f\in\overline{{\cal P}}_{k-1}$.
\item
\label{positive3}
$L(fE^x_af^\dagger E^{x'}_{a'}),L(fF^y_bf^\dagger F^{y'}_{b'}),L(fE^x_af^\dagger F^{y}_{b})\geq 0$ for any $f\in\overline{{\cal P}}_{k-1}$.
\item
\label{tracial}
$L(ST)=L(TS)$, for $ST\in $.
\item
\label{ulti2}
$P(a,b|x,y)=L(E^x_a,F^y_b)$.
\end{enumerate}
\end{defin}

\noindent The tracial nature of the underlying quantum state is expressed in condition \ref{tracial} by imposing that averages of products of operators must be invariant under cyclic permutations, and that expressions of the form $L(fXf^\dagger Y)$ are non-negative for positive semidefinite $X$, $Y$. As before, imposing conditions \ref{positive1}, \ref{positive2}, \ref{positive3} amounts to verifying that certain moment-like matrices are positive semidefinite, as in \cite{siam}. Note that, since $E^x_a$ and $F^y_b$ act on the same space, they do not commute any longer, unlike in Definition \ref{O_posit}. At a practical level, this implies that the decidability of $\Q_{+}^N$ is harder than that of $\Q^N$, as the former amounts to verifying the existence of matrices with many more rows and columns.

The proof of the closure of $\Q^N_{+}$ is very similar to that of $\Q^N$. Here follows a sketch: first, one must prove that $\Q^N_{+}$ is closed under identification of outcomes. This is not difficult, since the positivity conditions \ref{positive1}, \ref{positive2}, \ref{positive3} of definition \ref{Qplus_hier} are still satisfied when we perform the assignments $E^x_{\bar{a}}\to E^x_a+E^x_{a'}$. The next step is to prove that, under complete measurement strategies over $n$ boxes in $\Q^N_{+}$, there also exist linear functionals satisfying conditions \ref{positive1}, \ref{positive2}, \ref{positive3} for the resulting distribution. Taking such a functional to be $L(\prod_{i=1}^nS^i)=\prod_{i=1}^nL_i(S^i)$, conditions \ref{positive1}, \ref{positive2}, \ref{positive3} are a consequence of the fact that the tensor product of positive semidefinite matrices is also positive semidefinite.

\section{Some answers from the zoo}
\label{answers}

The zoo of closed sets allows to answer several of the questions posed in Section \ref{questions}.

We already saw that, due to monogamy \cite{lluis}, in all non-trivial Bell scenarios, $\GW_n\not=\NS$. Hence, if we could prove that $\GW_n\not=\L$, we could answer Question \ref{polytope} on the existence of non-trivial closed polytopes.

It is worth noticing that, under the hypothesis that $P\not=NP$, for every $n$ there must be some Bell scenario where $\GW_n$ contains $\L$ strictly. Indeed, on one hand linear optimizations over $\L$ constitute an NP-hard problem \cite{pitowsky}. On the other hand, for fixed $n$, $\GW_n$ can be described in each Bell scenario via a linear program that scales polynomially with the number of measurement settings. If $\GW_n=\L$, it would thus be possible to characterize $\L$ with a polynomial algorithm, contradicting the NP-hardness of characterizing $\L$. Similar considerations apply to the related polytopes $\TW_n$, also defined in Section \ref{zoo}.

This answers Question \ref{polytope}, if only under popular conjectures in complexity theory. We can get rid of such conjectures, however, precisely by exploiting the fact that all such sets admit an efficient characterization. Due to monogamy, $\GW=\L$ in the 2222 case; we must therefore explore more complex Bell scenarios.

The local regions of the 3322 scenario are defined by the CHSH Bell inequality \cite{chsh} and the $I_{3322}$ inequality \cite{I3322}, see eq. (\ref{I3322_def}). Using the MATLAB packages YALMIP \cite{yalmip} and SeDuMi \cite{sedumi}, we maximized $I_{3322}$ over all distributions in $\GW$, obtaining the value $1/3$, achievable by the box given in the table below.

\begin{center}
\begin{tabular}{|p{4cm}|p{1cm}|p{1cm}|p{1cm}|p{1cm}|}
\hline
 $P(a,b|x,y)$ & \multicolumn{4}{c|}{Outputs ($ab$)} \\ \hline
Inputs ($xy$) & 00 & 01 & 10 & 11 \\ \hline
00, 01, 10 and 11 & 1/3 & 0 & 1/3 & 1/3 \\ \hline
02 & 0 & 1/3 & 1/2 & 1/6 \\ \hline
12 & 1/3 & 0 & 1/6 & 1/2 \\ \hline
20 & 1/6 & 1/3 & 1/2 & 0 \\ \hline
21 & 1/2 & 0 & 1/6 & 1/3 \\ \hline
22 & 1/4 & 1/4 & 1/4 & 1/4 \\ \hline
\end{tabular}
\end{center}

\noindent The value $I^{\GW}_{3322}=\frac{1}{3}$ is quite far from both the local ($I^{\L}_{3322}=0$) and no-signalling limits ($I^{\NS}_{3322}=1$): $\GW$ is thus enough to answer Question \ref{polytope}. This result also answers Question \ref{intersection}, on the existence of closed sets with non-trivial intersection. Indeed, as we already noted, $\GW$ cannot violate the 2-setting CHSH inequality \cite{chsh}, and so there exists a point in $\Q$ not contained in $\GW$. On the other hand, as we just saw, $\GW$ violates $I_{3322}$ by an amount significantly greater than the quantum maximum of $I_{3322}^Q\approx 0.250875$ \cite{I3322_PV}; ergo, there are points in $\GW$ which are separated from $\Q$. $\Q\cap \GW$ is therefore neither $\Q$ nor $\GW$, and thus, by Theorem \ref{stable}, there exist bipartite Bell inequalities, i.e., physical principles (satisfied by quantum theory!), which are not stable under composition.

In the nonlocality community there is a generalized conjecture that, even for the 2222 scenario, the sets $\{\Q^N\}_N$ are all different from each other. This implies (under this assumption) that there exist infinitely many supraquantum closed sets, all of which differ in the same Bell scenario, hence answering Question \ref{supraquantum}. Under the conjecture $P\not=NP$ and Kirchberg's conjecture \cite{scholz, connes1,connes2}, we can prove a weaker result:

\begin{lemma}
Assume that $P\not=NP$ and that Kirchberg's conjecture holds. Then, for any $k\in \N$, there exist $k$ numbers $N_1,...,N_k$ and a Bell scenario with $|{\cal A}|=|{\cal B}|=2$ where the sets $\{Q^N_i\}_{i=1}^k$ are all different.

\end{lemma}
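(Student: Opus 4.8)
The plan is to argue that, under the two hypotheses, the decreasing chain $\Q^1\supseteq\Q^2\supseteq\cdots$ is forced to drop strictly at arbitrarily high levels, and then to gather $k-1$ such drops inside a single binary-output scenario. Two facts drive the argument. Kirchberg's conjecture guarantees that the NPA-type hierarchy converges to the quantum set, i.e.\ $\bigcap_N\Q^N=\Q$ in every Bell scenario (the hierarchy always converges to the commuting-operator set, which coincides with $\Q$ precisely under Kirchberg). Separately, for each \emph{fixed} $N$ the set $\Q^N$ is characterized by a semidefinite program whose size is polynomial in the number of settings, whereas optimizing a linear (Bell) functional over $\Q$ is NP-hard. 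It is the clash between these two facts, mediated by $P\neq NP$, that forbids a uniform collapse level.

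First I would establish that there is \emph{no} uniform collapse level: for every fixed $N_0\in\N$ there is a binary-output scenario with $\Q^{N_0}\supsetneq\Q$. Indeed, if some $N_0$ satisfied $\Q^{N_0}=\Q$ in \emph{all} such scenarios, then maximizing any Bell functional over $\Q$ would reduce to a polynomial-size SDP over $\Q^{N_0}$ and hence be solvable in polynomial time, contradicting the NP-hardness of $\Q$-optimization together with $P\neq NP$. Next I would turn a non-collapse into an actual drop: using $\bigcap_N\Q^N=\Q$, the strict inclusion $\Q^{N_0}\supsetneq\Q$ produces a witness $p\in\Q^{N_0}\setminus\Q$; since $p\notin\bigcap_N\Q^N$ there is $N_1>N_0$ with $p\notin\Q^{N_1}$, so $\Q^{m}\supsetneq\Q^{m+1}$ for some $m$ with $N_0\le m<N_1$. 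Thus a genuine drop occurs at a level $\ge N_0$.

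Applying these two observations along a rapidly increasing sequence of thresholds yields $k-1$ strict drops at pairwise distinct levels $m_1<\cdots<m_{k-1}$, each realized in some (possibly different) binary-output scenario $S_j$, with a witness $p_j\in\Q^{m_j}(S_j)\setminus\Q^{m_j+1}(S_j)$. I would then merge these into a single binary-output scenario $S$ whose settings are the disjoint union of those of the $S_j$. Embedding $p_j$ into $S$ — using deterministic responses ($E^x_0=\id$, $E^x_1=0$, and likewise for Bob) on all settings outside the $j$-th block — gives a correlation in $\Q^{m_j}(S)$, since the feasible level-$m_j$ functional for $p_j$ extends to $S$. Conversely, a feasible level-$(m_j+1)$ functional for the embedding in $S$ restricts to one for $p_j$ in $S_j$ — positivity, normalization, and the completeness/orthogonality relations all being inherited by the sub-algebra generated by the $j$-th block — contradicting $p_j\notin\Q^{m_j+1}(S_j)$. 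Hence $\Q^{m_j}(S)\supsetneq\Q^{m_j+1}(S)$ for every $j$, so the chain for $S$ has at least $k$ distinct terms; taking $N_i=m_i$ for $i<k$ and $N_k=m_{k-1}+1$ gives $\Q^{N_1}\supsetneq\cdots\supsetneq\Q^{N_k}$, as required.

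I expect the main obstacle to be the complexity input of the first step, namely the NP-hardness of optimizing a linear functional over $\Q$ in binary-output scenarios as a function of the number of settings; the whole scheme is arranged so that this single hardness fact, filtered through Kirchberg's convergence, produces non-collapse, in exact analogy with how Pitowsky's hardness of $\L$ was used for $\GW_n$. The only remaining technical point is the embedding verification of the last step, i.e.\ that membership in $\Q^n$ is inherited when passing to the sub-scenario generated by a block of settings; this is routine, since the defining level-$n$ SDP constraints for the block are precisely the restriction of those for $S$.
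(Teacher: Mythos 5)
Your argument is correct and rests on exactly the same two ingredients as the paper's own proof: $P\neq NP$ together with the polynomial-size SDP description of $\Q^{N}$ rules out a uniform collapse level, and Kirchberg's conjecture (convergence of the hierarchy to $\Q$) turns each non-collapse into a strict drop at an arbitrarily high level. The only difference is organizational — the paper runs a \emph{reductio} on a supposedly maximal number of distinct sets and keeps enlarging one scenario, whereas you build the $k$ strict inclusions directly and merge witnesses from different scenarios via a disjoint union of settings; the embedding/restriction fact you verify explicitly is the same one the paper uses implicitly when asserting that sets whose projections to an $s$-setting scenario differ remain different after adding settings.
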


\begin{proof}
We will prove it by \emph{reductio ad absurdum}. Assume that, for any Bell scenario with $|{\cal A}|=|{\cal B}|=2$, there are no more than $k$ different sets of the type $\{Q^N_i\}_i$, and that this number is tight. Then, there must exist $s\in \N$ such that, in the Bell scenario $|{\cal X}|=|{\cal Y}|=s$, there exist $N_1<...<N_k\in \N$, with $\Q^{N_i}\not=\Q^{N_j}$, for $i\not=j$. Now, characterizing $\Q$ in Bell scenarios with $|{\cal A}|=|{\cal B}|=2$ is an NP-hard problem in the number of measurement settings \cite{QBell_NP}. Since $\Q^{N_k}$ admits an efficient semidefinite programming description, by $P\not=NP$ we thus have that, for some $n>s$, $\Q^{N_k}\not=\Q$ in the $n$-setting Bell scenario. Also, by Kirchberg's conjecture we have that $\lim_{N\to\infty}\Q^N=\Q$. It follows that there exists $N_{k+1}>N_k$ such that $\Q^{N_{k+1}}\subsetneq \Q^{N_{k}}$ in the $n$-setting Bell scenario. In that scenario, the sets $\Q^{N_1},...,\Q^{N_k}$ must be all different, since their projections to $s$-setting Bell scenarios are. We therefore conclude that $\Q^{N_{k+1}}$ is different from $\Q^{N_1},...,\Q^{N_k}$, hence contradicting the initial hypothesis.

\end{proof}

Moving to the tripartite setting, we saw that each of the sets $\{\Q^{A^NB^NC^N}\}_N$ constitutes a tripartite supraquantum closed set. That some of these sets are non-trivial can be seen from the fact that there exist tripartite Svetlichny-type inequalities for which correlations within the set $\Q^{A^1B^1C^1}$ does not return the quantum bound. Indeed, let us pick inequality 409 from the list of Bancal \emph{et al.}~\cite{bancal}:
\begin{eqnarray}
S_{409}=&&  -2P_A(0|1) - 2 P_{A,B}(0,0|1,2) + 2 P_{A,B}(0,0|2,2)  - P(0,0,0|1,1,1)+ \nonumber\\
        && + 4P(0,0,0|1,1,2) + 2P(0,0,0|1,2,2) - 2P(0,0,0|2,2,2) + \nonumber\\
        &&+\text{sym} \leq 0,
\label{S409}
\end{eqnarray}

\noindent where sym denotes the missing permutationally invariant terms. The maximum within the set $\Q^{A^1B^1C^1}$ is $0.0221$, however by increasing the level of hierarchy we eventually recover the quantum maximum of $0.0132$ up to numerical precision.

This answers Question \ref{tri_ex} in the affirmative.

Finally, using the sets defined in the previous section, one can find points of non-distillability for different Bell inequalities. As before, optimizations were conducted using the MATLAB packages YALMIP \cite{yalmip} and SeDuMi \cite{sedumi}. The results, together with an indication of which sets we used to derive them, appear in Figures \ref{CGLMP_fig} and \ref{I3322_fig}. Each line in the graphs indicates the existence of a pair of boxes with the property that arbitrarily many copies of them do not allow to increase the indicated Bell violation.

\begin{figure}
  \centering
  \includegraphics[width=6 cm]{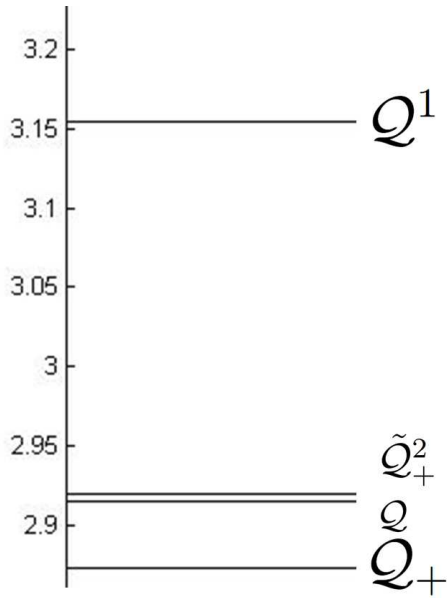}
  \caption{\textbf{Undistillable points of the CGLMP Bell inequality \cite{CGLMP}.} The bound for $\Q_{+}$ was obtained after optimizing over $\Q_{+}^2$ and finding the same result as conjectured in \cite{toni}. The set $\tilde{\Q}^2_{+}$ corresponds to an optimization where Condition \ref{positive3} in Definition \ref{Qplus_hier} was omitted.}
  \label{CGLMP_fig}
\end{figure}

\begin{figure}
  \centering
  \includegraphics[width=9.5 cm]{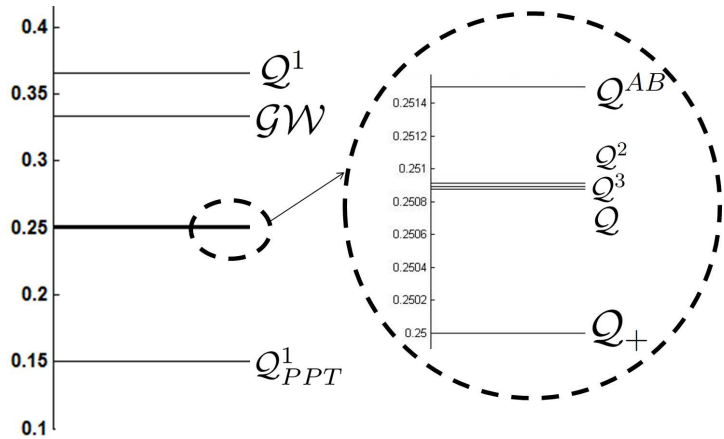}
  \caption{\textbf{Undistillable points of the $I_{3322}$ Bell inequality.}}
  \label{I3322_fig}
\end{figure}

\section{Conclusion}
\label{conclusion}

In this paper we have presented a zoo of physically consistent sets which admit an efficient characterization via linear and semidefinite programming. Our zoo defies the view that consistent sets are scarce or that their characterization, except for a few simple cases, is undecidable. We used different animals in the zoo to answer a number of questions regarding closed sets. Now we know that there exist closed non-trivial polytopes, infinitely many bipartite and multipartite supraquantum sets and closed sets with non-trivial intersection with the quantum set. As we saw, this last result implies the existence of bipartite physical principles $Z$ which are not stable under composition, in the sense that boxes belonging to different sets satisfying them can be wired into a box violating $Z$. This teaches us to beware of expressions like ``the maximal set of distributions compatible with Information Causality'', since, despite their seeming innocuity, they may not have any meaning at all. Important open questions which nevertheless remain open are the decidability of the nonlocality distillation problem (Question \ref{decidable}) and the existence/non-existence of a continuum of closed sets (Question \ref{continuum}).

To end with a positive note, the zoo has allowed us to derive, for the first time, a number of \emph{concrete} undistillable points of the
$I_{3322}$ \cite{I3322} and CGLMP \cite{CGLMP} inequalities. In view of this success, it would be interesting to find a continuous generalization of some of the families in the zoo which allow us to connect all such points.

\vspace{10pt}
\noindent\emph{Acknowledgements}

M.N. acknowledges the European Commission (EC) STREP "RAQUEL", as well as the MINECO project FIS2008-01236, with the support of FEDER funds. T.V. acknowledges financial support from a J\'anos Bolyai Grant of the Hungarian Academy of Sciences, the Hungarian National Research Fund OTKA (PD101461), and the T\'AMOP-4.2.2.C-11/1/KONV-2012-0001 project.

\end{document}